\def\thm@space@setup{\thm@preskip=\parskip \thm@postskip=0pt}
\newcommand{\R}{\mathbb{R}}
\newcommand{\C}{\mathbb{C}}
\newcommand{\M}{\mathcal{M}}
\newcommand{\set}[2]{\{\,#1\mid#2\,\}}
\DeclareMathOperator{\real}{Re}
\DeclareMathOperator{\imag}{Im}
\DeclareMathOperator{\grad}{grad}
\DeclareMathOperator{\divv}{div}
\DeclareMathOperator{\hess}{Hess}
\DeclareMathOperator{\dvol}{dvol}
\DeclareMathOperator{\second}{II}
\DeclareMathOperator{\tr}{tr}
\newtheorem{thm}{Theorem}
\newtheorem{prop}{Proposition}
\newtheorem{lem}{Lemma}
\theoremstyle{definition}
\newtheorem{rem}{Remark}
\begin{document}

\title{Quantum dynamics of a particle constrained to lie on a surface}
\author{Gustavo de Oliveira\footnote{Institute for Applied Mathematics, University of Bonn, Endenicher Allee 60, 53115 Bonn, Germany. E-mail address: gustavo.de.oliveira@hcm.uni-bonn.de.} \thanks{Supported by ERC Grant MAQD 240518 (and previously by FAPESP fellowship no. 2010/06971-9 at Universidade Federal de S\~{a}o Carlos, where part of this work was done).}}
\date{\normalsize August 29, 2014}
\maketitle

\begin{abstract}
  \noindent We consider the quantum dynamics of a charged particle in Euclidean space subjected to electric and magnetic fields under the presence of a potential that forces the particle to stay close to a compact surface. We prove that, as the strength of this constraining potential tends to infinity, the motion of this particle converges to a motion generated by a Hamiltonian over the surface superimposed by an oscillatory motion in the normal directions. Our result extends previous results by allowing magnetic potentials and more general constraining potentials.
\end{abstract}

\section{Introduction}

The time evolution of the wave function $\psi$ for a non-relativistic charged particle without spin in Euclidean space subjected to electric and magnetic fields is determined by the Schr\"{o}dinger equation together with an initial condition: $i \partial_t \psi = H \psi$ with $\psi|_{t=0} = \psi_0$, where $\psi_0 \in L^2(\R^3)$ and $H$ is the Hamiltonian operator which acts on $L^2(\R^3)$ as
\[
  H \psi = i \divv(i \grad(\psi) + A \psi) + \langle A, i \grad(\psi) + A \psi \rangle + V \psi.
\]
Here, the inner product is Euclidean and the gradient and divergence are calculated using the Euclidean metric. The magnetic vector potential $A$ and the electric scalar potential $V$ are defined on $\R^3$. We will consider the motion of this particle when its position is constrained to lie on a smooth compact $2$-submanifold $\Sigma \subset \R^3$. Our goal here is to prove that this motion can be effectively described by a unitary group on $L^2(\Sigma)$.

A candidate for providing this effective description is the following: The time evolution of the wave function for the particle on $\Sigma$ is generated by the Hamiltonian $H$ as above, but now $L^2(\R^3)$ is replaced by $L^2(\Sigma)$, the inner product on the tangent spaces of $\Sigma$ is inherited from $\R^3$, the gradient and divergence are calculated using the inherited metric, the vector field $A$ is tangent to $\Sigma$, and $V$ is a function on $\Sigma$. The description of motion obtained in this way is \emph{intrinsic} to $\Sigma$ because it depends only on the inherited metric. 

In contrast to the above description of unconstrained motion (which is based on a fundamental equation), this description of constrained motion is a mere idealization. Here is one way of justifying it: Instead of considering a particle moving on $\Sigma$, we can consider a particle moving in $\R^3$ subjected to a potential that forces the particle to stay near $\Sigma$. The motion of the particle is then generated by the Hamiltonian
\[
  H_\lambda = H + \lambda^4 W
\]
on $L^2(\R^3)$ for large $\lambda$, where $W$ is a positive potential on $\R^3$ that vanishes on $\Sigma$ (we put the power $4$ on $\lambda$ to simplify the notation later). Thus, to justify the above description of constrained motion, we need to answer the following question: As $\lambda$ tends to infinity, does the motion generated by $H_\lambda$ converges to the intrinsic constrained motion?

The answer to this question is no, but if we slightly modified it, the answer is yes. Namely, as $\lambda$ tends to infinity, the motion generated by $H_\lambda$ converges to the motion generated by the constrained Hamiltonian with an additional scalar potential (which depends on the Gaussian and mean curvatures of $\Sigma$) superimposed by an oscillatory motion in the normal directions to $\Sigma$. Since the mean curvature is not an intrinsic quantity, this limit motion is \emph{not intrinsic} to $\Sigma$. This is known when the magnetic potential $A$ is zero and the constraining potential $W$ is exactly quadratic in the normal directions to $\Sigma$ \cite{FH}. Our goal here is to extend this result to the case of nonzero $A$ and more general $W$ in the present setting (in \cite{FH}, the dimension of the ambient space and the codimension of the submanifold are arbitrary). Our hypotheses on the potentials are the following. Roughly speaking, we assume that $W$ has the shape of a well in the normal directions to $\Sigma$ and that the derivative of $W$ in the parallel directions to $\Sigma$ are small near $\Sigma$. There are no restrictions for the electric and magnetic potentials (see hypotheses in Theorem \ref{t:main}).

Models for constrained dynamics under the presence of magnetic field have been considered recently in the physics literature (see \cite{FC,ADA} and references therein). Our paper gives a full mathematical justification for the equations written in \cite{FC}. Other thing that we do here is to prove that the Hamiltonian that generates the limit motion does not depend on the component of $A$ that is normal to $\Sigma$. This settles an issue that was raised in \cite{E} and clarified in \cite{FC}.

In the absence of magnetic field the problem of constraints that we described above was considered previously by Jensen and Koppe \cite{JK}, da Costa \cite{daC1,daC2}, Tolar \cite{T}, and more recently by Mitchell \cite{M}, who provided the most general formal expansions (see also Dell'Antonio and Tenuta \cite{DT} for another approach). The first mathematical analysis appeared in Froese and Herbst \cite{FH}. This is the main reference for our work. Later, Wachsmuth and Teufel \cite{WT} considered the problem from a different point of view (using a different scaling). The lists of references in \cite{FH,M,WT} contain most of the related work to date.

Concerning the mathematical analysis including magnetic field, we are aware of the recent work of Krej\v{c}i\v{r}\'{i}k, Raymond and Tu\v{s}ek \cite{KRT}, who considered the Dirichlet problem for the magnetic Laplacian in a setting similar to ours. Here we consider the dynamical problem in a certain limit starting from the Schr\"odinger equation in Euclidean space.

Our main result (Theorem \ref{t:main}) appears in Section \ref{s:main} after we introduce some notation and set the problem is a suitable way. This section also contains an outline of the method of proof. Next, in Section \ref{s:coord}, we write the local coordinate expressions for all the quantities that we need for the proofs. The Proposition \ref{p:transfer} that we prove in Section \ref{s:transfer} stands together with Theorem \ref{t:main} to support the analysis outlined in Section \ref{s:main}. They are independent statements. In Sections \ref{s:ebounds} and \ref{s:pbounds}, we prove some lemmas which we combine in Section \ref{s:proof} to prove Theorem \ref{t:main}.

\section{Main result and method of proof} \label{s:main}

We will first set up the problem in a suitable way and then we will state our main result. We follow the method of Froese and Herbst \cite{FH}.

Since we will reduce our analysis to the study of the dynamics near $\Sigma$, and we want to identify parallel and normal motions to $\Sigma$, it is convenient to perform a change of variables which moves our problem from a tubular neighborhood of $\Sigma$ in $\R^3$ to the normal bundle of $\Sigma$.

The normal bundle of $\Sigma$ is the submanifold of $T \R^3$ defined by
\[
  N \Sigma = \set{(\sigma, n) \in T \R^3}{\sigma \in \Sigma \ \text{and} \ n \in N_\sigma \Sigma},
\]
where $N_\sigma \Sigma$ denotes the normal space to $\Sigma$ at $\sigma$, and $T\R^3$ denotes the tangent bundle of $\R^3$. Note that $T \R^3$ is diffeomorphic to $\R^3 \times \R^3$ (in a trivial way). For a constant $c > 0$, we define
\[
  N \Sigma_c = \set{(\sigma, n) \in N \Sigma}{|n| < c}.
\]
Here $|\cdot|$ denotes the Euclidean norm. Define a map $E : N \Sigma \to \R^3$ by
\[
  E(\sigma, n) = \sigma + n.
\]
Since $\Sigma$ is compact, by the Tubular Neighborhood Theorem there exists a constant $\delta > 0$ such that this map is a diffeomorphism from $N \Sigma_\delta$ onto a tubular neighborhood of $\Sigma$ in $\R^3$ \cite[Theorem 6.24]{L}. Using the map $E$, we can pull back the Euclidean metric from $\R^3$ to $N \Sigma_\delta$. This provides a metric on $N \Sigma_\delta$. More generally, we can pull back forms and push forward vector fields from a tubular neighborhood of $\Sigma$ to $N \Sigma_\delta$.

We want to study the time evolution of $\psi_{0,\lambda} \in L^2(\R^3)$ generated by $H_\lambda$ in the case where $\psi_{0,\lambda}$ is supported near $\Sigma$. We will show that, up to an error term that vanishes as $\lambda$ tends to infinity, the time evolution of such initial condition stays near $\Sigma$ (Proposition \ref{p:transfer}, Section \ref{s:transfer}). Thus, we can impose Dirichlet boundary condition on the boundary of a tubular neighborhood of $\Sigma$, and then we can move our problem to $L^2(N \Sigma_\delta, \dvol)$. Here, $\dvol$ is calculated using the pulled back metric. By extending the pulled back metric and the Hamiltonian $H_\lambda$ to the complement of $N \Sigma_\delta$, we can consider our problem in $N\Sigma$ without boundary condition. Again, this can be done up to an error term that vanishes as $\lambda$ tends to infinity (Proposition \ref{p:transfer}). Therefore, we can consider the Hamiltonian $H_\lambda$ acting on $L^2(N \Sigma, \dvol)$, where $\dvol$ is calculated using the extended metric.

More precisely, suppose that $g_{N \Sigma}$ is a complete Riemannian metric on $N \Sigma$ that equals the pulled back metric on $N \Sigma_\delta$. Denote by $\dvol$ the Riemannian density for $g_{N \Sigma}$. Let $V$ and $W$ be smooth functions on $N \Sigma$ that equal the corresponding pulled back potentials on $N \Sigma_\delta$. Suppose that $V$ is bounded. Let $A$ be a smooth bounded vector field on $N \Sigma$ that equals the corresponding pushed forward vector potential on $N \Sigma_\delta$. We will study the time evolution generated by $H_\lambda$ acting on $L^2(N \Sigma, \dvol)$ as
\begin{equation} \label{Hlam}
  \boxed{
  H_\lambda \psi = i \divv(i \grad(\psi) + A \psi) + \langle A, i \grad(\psi) + A \psi \rangle + V \psi + \lambda^4 W \psi.
  }
\end{equation}

We next explain how to decompose vectors in the tangent and cotangent spaces of $N \Sigma$ into horizontal and vertical components. This is useful in our analysis in order to identify the parallel and normal motions to $\Sigma$.

Let $\pi : N \Sigma \to \Sigma$ be the projection
\[
  \pi(\sigma, n) = \sigma.
\]
We define the vertical subspace of $T_{\sigma, n} N \Sigma$ as the kernel of the mapping
\[
  d \pi_{\sigma,n} : T_{\sigma,n} N \Sigma \to T_\sigma \Sigma,
\]
where $d\pi_{\sigma,n}$ is the derivative of $\pi$ at $(\sigma,n)$, and $T_{\sigma, n} N \Sigma$ and $T_\sigma \Sigma$ denote the tangent space to $N \Sigma$ at $(\sigma, n)$ and the tangent space to $\Sigma$ at $\sigma$, identified with subspaces of $\R^3 \times \R^3$ and $\R^3$, by means of the usual identifications, respectively. We then define the horizontal subspace of $T_{\sigma, n} \, N \Sigma$ as the orthogonal complement of the vertical subspace in $T_{\sigma, n} N \Sigma$ with respect to the inner product defined by the metric $g_{N \Sigma}$. We will denote by $P_H$ and $P_V$ the projections of the tangent spaces of $N \Sigma$ onto the horizontal and vertical subspaces. In view of the identification of $T_{\sigma,n} N \Sigma$ with $T^*_{\sigma, n} N \Sigma$ given by the inner product, we obtain a decomposition of cotangent vectors into horizontal and vertical components as well. We will denote by $P^H$ and $P^V$ the projections of the cotangent spaces of $N \Sigma$ onto the horizontal and vertical subspaces.

Since the restriction of $d\pi_{\sigma,n}$ to the horizontal subspace of $T_{\sigma,n} N \Sigma$ is an isomorphism, the adjoint $d\pi_{\sigma,n}^* : w \mapsto w \circ d\pi_{\sigma,n}$ is an isomorphism of $T_\sigma^* \Sigma$ onto the horizontal subspace of $T_{\sigma,n}^* N \Sigma$. Thus
\begin{equation} \label{J}
  J_{\sigma,n} \coloneqq (d\pi_{\sigma,n}^*)^{-1}
\end{equation}
is well-defined on the horizontal subspace of $T_{\sigma,n}^* N \Sigma$. The map $J_{\sigma,n}$ identifies the horizontal subspace of $T_{\sigma,n}^* N \Sigma$ with $T_\sigma^* \Sigma$.

More concretely, the decomposition on $N \Sigma_\delta$ can be understood as follows. For each $\sigma \in \Sigma$, we have the decomposition $T_\sigma \R^3 = T_\sigma \Sigma \oplus N_\sigma \Sigma$. Using the identification of $T_\sigma \R^3$ with $\R^3$, and the identification of $T_\sigma \Sigma$ and $N_\sigma \Sigma$ with subspaces of $\R^3$, we obtain a decomposition of $\R^3$ for each point $\sigma$. We will denote by $P_\sigma^T$ and $P_\sigma^N$ the orthogonal projections onto the tangent and normal subspaces. Since we are considering $N \Sigma_\delta$ as a $3$-submanifold of $T \R^3$, we can identify $T_{\sigma,n} N \Sigma_\delta$ with the $3$-dimensional subspace of $\R^3 \times \R^3$ that consists of all the vectors of the form $(X,Y) = \frac{d}{dt} (\sigma(t), n(t)) \big|_{t=0}$, where $(\sigma(t), n(t))$ is a path in $N \Sigma_\delta$ that crosses $(\sigma, n)$ at $t=0$. We define the inner product of two such tangent vectors by
\begin{equation} \label{ip}
  \begin{split}
    \langle (X_1,Y_1), (X_2, Y_2) \rangle & \coloneqq g_{N \Sigma}((X_1, Y_1), (X_2, Y_2)) \\
    & = \langle dE_{\sigma,n}(X_1,Y_1), dE_{\sigma,n}(X_2,Y_2) \rangle\\
    & = \langle X_1 + Y_1, X_2 + Y_2 \rangle,
  \end{split}
\end{equation}
where $dE_{\sigma,n} : \, T_{\sigma,n} N \Sigma_\delta \to T_{\sigma+n} \R^3$ is the derivative of $E$ at $(\sigma,n)$ with the usual identifications, and the inner product on the right-hand side is Euclidean. Thus, the decomposition of $(X, Y) \in T_{\sigma,n} N \Sigma_\delta$ into horizontal and vertical components is
\[
  (X, Y) = (X, P_\sigma^T Y) + (0, P_\sigma^N Y).
\]
This induces a decomposition for cotangent vectors in $T^*_{\sigma, n} N \Sigma_\delta$.

To prove that the limit motion does not depend on the component of $A$ that is normal to $\Sigma$, we introduce a gauge transformation.

Let $\gamma$ be a real-valued function on $N \Sigma$. Then the transformation $S_\gamma$ defined to be the multiplication operator by $e^{i\gamma}$ on $L^2(N\Sigma,\dvol)$ is unitary. Thus the time evolution generated by $S_\gamma^* H_\lambda^{} S_\gamma^{}$ is unitarily equivalent to the time evolution generated by $H_\lambda$. The operator $S_\gamma^* H_\lambda^{} S_\gamma^{}$ is given by the expression in \eqref{Hlam} with $A$ replaced by $A - \grad(\gamma)$. We will consider the time evolution generated by $S_\gamma^* H_\lambda^{} S_\gamma^{}$ with $\gamma$ such that
\begin{equation} \label{gauge}
  \boxed{
  P_V (A - \grad(\gamma)) = 0 \ \text{on} \ N\Sigma_\delta \qquad \text{and} \qquad P_H \grad(\gamma)|_{\Sigma}^{} = 0.
  }
\end{equation}
In Section \ref{s:H}, we will prove that this condition can be fulfilled.

To identify the limit motion we will consider a sequence of orbits generated by $H_\lambda$ with initial conditions whose support is squeezed towards $\Sigma$ as $\lambda$ tends to infinity. This will be implemented by using a dilation transformation in the normal directions. Equivalently, we will consider a sequence of orbits generated by $H_\lambda$ conjugated by unitary dilations, but with a fixed initial condition.

For $\lambda > 0$, define a dilation (in the normal direction) $d_\lambda : N \Sigma \to N \Sigma$ by
\[
  d_\lambda(\sigma,n) = (\sigma,\lambda n),
\]
and set
\[
  \dvol_\lambda = \lambda (d_\lambda^{-1})^* \dvol,
\]
where $(d_\lambda^{-1})^*$ denotes the pull-back by $d_\lambda^{-1}$ (see \eqref{dens} for the local coordinates expressions). Then the transformation $D_\lambda : L^2(N \Sigma, \dvol_\lambda) \to L^2(N \Sigma, \dvol)$ defined by
\[
  D_\lambda \psi = \sqrt{\lambda} \ \psi \circ d_\lambda
\]
is unitary. Note that the space $L^2(N \Sigma, \dvol_\lambda)$ depends on $\lambda$. To obtain a fixed space as $\lambda$ tends to infinity, we introduce another transformation. Set
\[
  \dvol_{N \Sigma} = \lim_{\lambda \to \infty} \dvol_\lambda,
\]
and consider the quotient of densities $\dvol_{N \Sigma}/\dvol_\lambda$, which is a function on $N \Sigma$. Then the transformation $M_\lambda : L^2(N \Sigma, \dvol_{N \Sigma}) \to L^2(N \Sigma, \dvol_\lambda)$ defined by
\[
  M_\lambda \psi = \sqrt{\dvol_{N \Sigma}/\dvol_\lambda} \ \psi
\]
is unitary. Thus the transformation $U_\lambda : L^2(N \Sigma, \dvol_{N \Sigma}) \to L^2(N \Sigma, \dvol)$ defined by
\[
  U_\lambda = D_\lambda M_\lambda
\]
is unitary. Note that, as $\lambda$ tends to infinity, the support of the function $S_\gamma U_\lambda \psi_0$ is squeezed towards $\Sigma$. This is easy to verify using the expressions in Section \ref{s:H}. Therefore, we will consider the time evolution generated by the conjugated Hamiltonian
\[
  L_\lambda^{} \coloneqq U_\lambda^* S_\gamma^* H_\lambda^{} S_\gamma U_\lambda^{}
\] 
with initial condition $\psi_0 \in L^2(N \Sigma,\dvol_{N \Sigma})$. This evolution is unitarily equivalent to the time evolution generated by $H_\lambda$ with initial condition $S_\gamma U_\lambda \psi_0 \in L^2(N \Sigma, \dvol)$.

The next step in our analysis is performing a large $\lambda$ expansion in $L_\lambda$. With the gauge condition in \eqref{gauge}, this formally yields
\[
  \boxed{
  L_\lambda = H_\Sigma + \lambda^2 H_{O,\lambda} + O(\lambda^{-1}),
  }
\]
where $H_\Sigma$ and $H_{O,\lambda}$ are defined by the quadratic forms
\[
  \langle \psi, H_\Sigma \psi \rangle = \int_{N \Sigma} \big( | J P^H (d + A(\sigma,0)) \psi |^2_\sigma + (V(\sigma,0) + K(\sigma)) |\psi|^2 \big) \dvol_{N \Sigma}
\]
and
\[
  \langle \psi, H_{O,\lambda} \psi \rangle = \int_{N \Sigma} \big( | P^V d \psi |^2_{\sigma,n} + (\tfrac{1}{2} \langle n, B(\sigma) n \rangle + F_\lambda(\sigma,n)) |\psi|^2 \big) \dvol_{N \Sigma}
\]
with
\begin{align*}
  K & = s - h^2, \\
  B(\sigma) & = (\hess_n W)(\sigma,0), \\
  F_\lambda(\sigma,n) & = \lambda^{-1} \sum_{|\alpha|=3} a_\alpha(\sigma) n^\alpha + \lambda^{-2} \sum_{|\beta|=4} b_\beta(\sigma) n^\beta.
\end{align*}
(The local coordinate expressions for $H_\Sigma$ and $H_{O,\lambda}$ are given in \eqref{Hs}.) The scalar potential in $H_{O,\lambda}$ comes from a Taylor series expansion (of fifth order in $n$) for $W$ about $(\sigma,0)$ (the remainder term is contained in $O(\lambda^{-1})$ in $L_\lambda$). Here, $\alpha$ and $\beta$ are multi-indices, $|\cdot|_{(\cdot)}^2 = \langle \cdot, \cdot \rangle_{(\cdot)}$, $d$ is the differential operator, $s$ and $h$ are the Gaussian and mean curvatures of $\Sigma$, respectively, and $\hess_n$ denotes the Hessian with respect to the normal variables. As mentioned earlier, the Hamiltonian $H_\Sigma$ is what we would expect to be the generator of a motion on $\Sigma$, but with an additional scalar potential $K$ that depends on the embedding of $\Sigma$ into $\R^3$ because the mean curvature does. Note that $H_\Sigma$ does not depend on the component of $A$ that is normal to $\Sigma$. This component is actually contained in $O(\lambda^{-1})$ in $L_\lambda$. The Hamiltonian $H_{O,\lambda}$ represents a quantum harmonic oscillator in the normal directions with a perturbative potential $F_\lambda(\sigma,n) = O(\lambda^{-1})$. Note that we can not absorb $F_\lambda$ into $O(\lambda^{-1})$ in $L_\lambda$ because $\lambda^2 F_\lambda(\sigma,n) = O(\lambda)$.

We can interpret the Hamiltonians $H_\Sigma$ and $H_{O,\lambda}$ better if we introduce another another metric on $N \Sigma$. For tangent vectors $(X_1,Y_1)$ and $(X_2,Y_2)$ in $T_{\sigma,n} N \Sigma$, we define
\[
  \langle (X_1,Y_1), (X_2,Y_2) \rangle_\lambda = \langle X_1, X_2 \rangle + \lambda^{-2} \langle P_\sigma^N Y_1, P_\sigma^N Y_2 \rangle,
\]
where the inner product on the right-hand side is Euclidean. (In Section \ref{s:H}, we will describe this inner product as a limit of the scaled pulled back inner product.) Thus we can write
\begin{equation} \label{Hsum}
  \begin{split}
    (H_\Sigma + \lambda^2 H_{O,\lambda}) \psi & = i \divv_\lambda( i \grad_\lambda(\psi) + A_\Sigma \psi) + \langle A_\Sigma, i \grad_\lambda(\psi) + A_\Sigma \psi \rangle_\lambda \\
                                              & \quad + (\tfrac{1}{2} \langle n, B(\sigma) n \rangle + F_\lambda(\sigma,n) + V(\sigma,0) + K(\sigma)) \psi,
  \end{split}
\end{equation}
where $A_\Sigma(\sigma) = P^H A(\sigma,0)$, and the gradient and divergence are calculated using the metric defined by $\langle \cdot, \cdot \rangle_\lambda$.

Since the metric defined by $\langle \cdot, \cdot \rangle_\lambda$ is complete, the operator $H_\Sigma + \lambda^2 H_{O,\lambda}$ is essentially self-adjoint on $C_0^\infty(N\Sigma)$ by \cite[Theorem 5.2]{S}. We can not apply this theorem directly to $H_\Sigma$ and $H_{O,\lambda}$, but by adapting its proof we can show that these operators are essentially self-adjoint on $C_0^\infty(N\Sigma)$. We do not give the details here.

We can now state our main result.

\begin{thm} \label{t:main}
  Consider a smooth compact $2$-submanifold $\Sigma \subset \R^3$. Let $g_{N\Sigma}$ be a complete Riemannian metric on $N \Sigma$ that equals the pulled back metric on $N \Sigma_\delta$. Suppose that $A$, $V$ and $W$ are $C^\infty$-functions on $N\Sigma$ that equal the corresponding pulled back and pushed forward potentials on $N\Sigma_\delta$. Suppose that $A$ and $V$ are bounded and suppose that for all $(\sigma,n) \in N \Sigma$ we have:
  \begin{enumerate}[\rm (i)]
    \item
      $W(\sigma,n) \ge \kappa |n|^2$ for some constant $\kappa > 0$.
    \item
      $W(\sigma,0)=0$, $dW(\sigma,0)=0$, and $\hess W(\sigma,0) > 0$.
    \item \label{hyp}
      $\partial_\sigma W(\sigma,n) = O(|n|^5)$ as $|n| \to 0$ for each $\sigma$, where $\partial_\sigma$ denotes the derivative with respect to $\sigma$.
  \end{enumerate}
  Consider the Hamiltonian $H_\lambda$ given in \eqref{Hlam} acting on $L^2(N\Sigma, \dvol)$, and the Hamiltonian $L_\lambda = U_\lambda^* S_\gamma^* H_\lambda S_\gamma U_\lambda^{}$ acting on $L^2(N \Sigma, \dvol_{N\Sigma})$ with $U_\lambda$ and $S_\gamma$ as above. Then, for every $\psi \in L^2(N\Sigma, \dvol_{N\Sigma})$, $T > 0$, and $0 < s < 1$, there exist constants $\lambda_0$ and $C$ such that, for every $\lambda > \lambda_0$, we have
  \begin{equation} \label{main}
    \sup_{t \in [0,T]} \big\| e^{-itL_\lambda} \psi - e^{-it (H_\Sigma + \lambda^2 H_{O, \lambda})} \psi \big\| \le C \lambda^{s-1}.
  \end{equation}
\end{thm}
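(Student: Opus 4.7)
The strategy is to compare the two unitary groups via Duhamel's formula and reduce the theorem to a quantitative remainder estimate for the expansion $L_\lambda = H_\Sigma + \lambda^2 H_{O,\lambda} + R_\lambda$ announced in the introduction, with $R_\lambda$ formally $O(\lambda^{-1})$. Writing $\Phi_\lambda(t) = e^{-it(H_\Sigma + \lambda^2 H_{O,\lambda})}$, a standard Duhamel computation yields
\begin{equation*}
  \bigl\| e^{-itL_\lambda}\psi - \Phi_\lambda(t)\psi \bigr\| \le \int_0^T \bigl\| R_\lambda \Phi_\lambda(\tau)\psi \bigr\| \, d\tau,
\end{equation*}
so the problem reduces to a bound $\|R_\lambda \Phi_\lambda(\tau)\psi\| = O(\lambda^{s-1})$ uniform in $\tau \in [0, T]$. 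By density of a suitable core (smooth functions with rapid decay in the fibres of $N\Sigma$) it suffices to establish this for $\psi$ in that core, with the loss $\lambda^{s}$ (as opposed to $\lambda^{0}$) coming from the need to approximate a general $L^2$ datum by a core element whose higher-order norms may blow up in $\lambda$.

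Using the local coordinate expressions of Section \ref{s:coord} I would expand $L_\lambda = U_\lambda^* S_\gamma^* H_\lambda^{} S_\gamma U_\lambda$ and sort the summands. The gauge condition \eqref{gauge} ensures that the normal component of $A$ disappears from leading order and only $A_\Sigma(\sigma) = P^H A(\sigma, 0)$ survives in $H_\Sigma$; the second-order Taylor expansion of $W$ about $\Sigma$, together with hypotheses (i)--(ii), produces the oscillator potential $\tfrac12 \langle n, B(\sigma) n \rangle$ in $\lambda^2 H_{O,\lambda}$ and the third/fourth-order terms in $F_\lambda$; the rest goes into $R_\lambda$. Each summand in $R_\lambda$ is schematically of the form $\lambda^{-k} n^a \partial_\sigma^b \partial_n^c$ with $k + a/2 \ge 1$, the factors arising from curvature corrections to the pulled-back metric, the fifth-order Taylor remainder of $W$ (controlled by hypothesis (iii)), the normal component of $A$ after the gauge fixing, and the variation of $V$ away from $\Sigma$.

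To close the bound I would combine two ingredients about $\Phi_\lambda(\tau)\psi$, to be established in Sections \ref{s:ebounds} and \ref{s:pbounds}. First, energy bounds controlling powers and products of $H_\Sigma$ and $\lambda^2 H_{O,\lambda}$ applied to $\Phi_\lambda(\tau)\psi$, uniformly in $\tau \in [0,T]$ and $\lambda$. Only the sum $H_\Sigma + \lambda^2 H_{O,\lambda}$ is conserved under the evolution, so one must estimate the commutator $[H_\Sigma, \lambda^2 H_{O,\lambda}]$, check that it remains of lower order after the rescaling, and then run a Gronwall argument on mixed quadratic forms $\langle \Phi_\lambda(\tau) \psi, H_\Sigma^p (\lambda^2 H_{O,\lambda})^q \Phi_\lambda(\tau) \psi \rangle$. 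Second, propagation bounds that convert these energy estimates into weighted $L^2$ estimates of the form $\|n^a \partial_\sigma^b \partial_n^c \Phi_\lambda(\tau)\psi\|$, using that $\lambda^2 H_{O,\lambda} \gtrsim \lambda^2 |n|^2$ by hypothesis (i), so that the oscillator concentrates states at scale $|n| \sim \lambda^{-1/2}$. Inserting these weighted bounds into each schematic summand of $R_\lambda$ then gives $\|R_\lambda \Phi_\lambda(\tau)\psi\| = O(\lambda^{-1})$ on the core, which combined with the density loss produces \eqref{main}. I expect the main technical obstacle to be precisely the non-commutation of $H_\Sigma$ and $\lambda^2 H_{O,\lambda}$: one must show that the oscillator frequency $\lambda$ does not amplify the errors when high moments and derivatives are propagated through the coupling of horizontal and vertical dynamics. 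Proposition \ref{p:transfer} enters only in setting up the framework, namely in justifying that after squeezing initial data towards $\Sigma$ the problem may be posed on $N\Sigma$ rather than in the ambient $\R^3$.
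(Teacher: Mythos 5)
Your Duhamel-plus-energy-plus-propagation skeleton is the right shape, but several load-bearing points are misplaced.

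First, the supposed ``main technical obstacle'' --- non-commutativity of $H_\Sigma$ and $\lambda^2 H_{O,\lambda}$ --- does not exist: hypothesis (iii) is designed precisely so that $[H_\Sigma, H_{O,\lambda}]=0$, and the comparison dynamics factorizes as $e^{-ith_\Sigma}\otimes e^{-it\lambda^2 h_{O,\lambda}}$. Consequently there is no Gronwall argument needed on $\Phi_\lambda(\tau)\psi$ and mixed forms in $H_\Sigma^p(\lambda^2 H_{O,\lambda})^q$: these are exactly conserved. Where a Gronwall argument genuinely is needed is on the \emph{full} evolution $e^{-itL_\lambda}$, for which the horizontal and vertical dynamics do \emph{not} decouple. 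The paper's Lemma~\ref{l:Dx} exhibits a concrete observable (a regularized horizontal kinetic energy $B_+ = \eta_1 B\eta_1 + 1$ built from $\chi_j D_x^* G_\Sigma^{-1} D_x \chi_j$) whose growth along $e^{-itL_\lambda}\eta_E\psi$ is controlled by bounding $\tilde\eta_E[iL_\lambda, B_+]\tilde\eta_E$ by $CB_+$; this is the step that shows the $O(\lambda^2)$ total energy does not leak into the tangential momentum faster than $O(1)$ on $[0,T]$. Without this one cannot close the error estimate, because the bad part of the remainder $Q_\lambda$ has the form $D_x^*E_1 D_x$ with $D_x$ (not $\lambda^{-1}D_x$), and the a priori energy bound only gives $\|\lambda^{-1}D_x\| \lesssim 1$.

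Second, the $\lambda^{s-1}$ rate is not a density artefact. The paper reduces first to $\psi\in C_0^\infty(N\Sigma)$ by ordinary density (no loss there), then inserts an energy cutoff $\eta_{\sharp E}$ and a spatial cutoff $\eta_2$ supported in $|n|<\lambda^s$. The parameter $s$ enters through a genuine trade-off: the tail $\|(1-\eta_2)\eta_{\sharp E}\|\le C\lambda^{-s}$ gets worse as $s\downarrow 0$, while the interior error $\|\eta_2 E_1\|\le C\lambda^{s-1}$ gets worse as $s\uparrow 1$, and the rate in \eqref{main} is the compromise. Third, the paper works with the inner-product (polarization) form $\langle e^{-itL_\lambda}\psi, e^{-itL_{0,\lambda}}\psi\rangle$ rather than the norm $\|R_\lambda\Phi_\lambda(\tau)\psi\|$: this lets one distribute the two $D_x$ factors of $D_x^*\eta_2 E_1 D_x$ onto both evolved states by Cauchy--Schwarz, needing only one $D_x$ bounded on each side (via Lemma~\ref{l:Dx} applied to \emph{both} $L_\lambda$ and $L_{0,\lambda}$). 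Your one-sided Duhamel would require two horizontal derivatives controlled on $\Phi_\lambda(\tau)\psi$, which is more than the paper proves and is not automatic given the $\lambda$-dependence of $H_{O,\lambda}$; you would need to argue separately that the factorized evolution preserves higher regularity uniformly in $\lambda$.

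Finally, a minor but telling slip: after the dilation $d_\lambda$ the normal concentration scale in the $y$-coordinate is $O(1)$, not $O(\lambda^{-1/2})$; the $\lambda^{-1/2}$ scale belongs to the undilated problem. Keeping careful track of which variables are dilated is essential, since the error terms $E_j$ are functions of $y/\lambda$ and the cutoffs live at $y\sim\lambda^s$.
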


Theorem \ref{t:main} says that the left-hand side of \eqref{main} tends to zero as $\lambda$ tends to infinity. We remark that neither orbits in \eqref{main} converge as $\lambda$ tends to infinity. It is only their difference that converges. We remark also that Theorem \ref{t:main} combined with Proposition \ref{p:transfer} yield a statement about convergence of wave functions on Euclidean space. This relates the evolution determined by the Schr\"odinger equation to the evolution generated by $H_\Sigma + \lambda^2 H_{O,\lambda}$.

Since $\Sigma$ has codimension one, the normal bundle of $\Sigma$ is trivial, and thus $L^2(N\Sigma, \dvol_{N\Sigma}) = L^2(\Sigma, \dvol_\Sigma) \otimes L^2(\R,dy)$. Furthermore $[H_\Sigma, H_{O,\lambda}] = 0$ because of hypothesis (iii) in Theorem \ref{t:main}. Therefore we have $H_\Sigma = h_\Sigma \otimes I$ for a Hamiltonian $h_\Sigma$ acting on $L^2(\Sigma,\dvol_\Sigma)$, and $H_{O,\lambda} = I \otimes h_{O,\lambda}$ for a Hamiltonian $h_{O,\lambda}$ acting $L^2(\R)$. Consequently
\[
  \begin{split}
    \exp(-it(H_\Sigma + \lambda^2 H_{O,\lambda})) & = \exp(-it H_\Sigma) \exp(-it \lambda^2 H_{O,\lambda}) \\
                                                  & = \exp(-ith_\Sigma) \otimes \exp(-\lambda^2 it h_{O,\lambda}).
  \end{split}
\]
This can be interpreted as a motion on $\Sigma$, described by a unitary group on $L^2(\Sigma,\dvol_\Sigma)$, superimposed by normal oscillations to $\Sigma$ (which can be effectively ignored). Thus Theorem \ref{t:main} says that the motion generated by $L_\lambda$, which is unitarily equivalent to the motion generated by $H_\lambda$, converges to a superposition of motion on $\Sigma$ and normal oscillations as $\lambda$ tends to infinity.

Theorem \ref{t:main} allows for a constraining potential which is not constant in the parallel directions to $\Sigma$ (for example $W(\sigma,n) = |n|^2 + |n|^4 + f(\sigma) |n|^6$, where $f$ is a function).

The relation between the potential $\lambda^4 W$ and the dilation $d_\lambda$ was chosen so that the normal energy diverges like $\lambda^2$ (the normal energy is the energy associated to $\lambda^2 H_{O,\lambda}$). The reason for this is that the energy levels of an harmonic oscillator with potential $\alpha y^2$ are proportional to $\sqrt{\alpha}$.

\section{Coordinate expressions} \label{s:coord}

In this section we introduce local coordinates for $\Sigma$ and $N \Sigma$ and for its tangent and cotangent bundles. Then we write the quantities that we need for our proofs in local coordinates. When there is no risk of confusion, we will denote functions on a manifold and their coordinate representations by the same letter.

\subsection{Charts and partition of unity}

Let $x(\sigma)$ be a local coordinate map for $\Sigma$, and let $\{ \partial/\partial x_1, \partial/\partial x_2 \}$ and $\{ dx_1, dx_2 \}$ be the standard bases for the tangent and cotangent spaces of $\Sigma$. This yields local coordinates for $T \Sigma$ and $T^* \Sigma$ in the usual way. We will denote by $(x,p) \in \R^2 \times \R^2$ the coordinates of the cotangent vector $\sum_1^2 p_j dx_j$ in the cotangent space at $\sigma(x)$.

Let $\nu(\sigma)$ be a local unit normal vector to $\Sigma$ depending smoothly on $\sigma$. We obtain local coordinates for $N \Sigma$ by defining
\[
  \boxed{
  \begin{aligned}
    x(\sigma,n) & = x(\sigma), \\
    y(\sigma,n) & = \langle \nu(\sigma), n \rangle.
  \end{aligned}
  }
\]
Let $\{ \partial / \partial x_1, \partial/\partial x_2, \partial/ \partial y \}$ and $\{ dx_1, dx_2, dy \}$ be the standard bases for the tangent and cotangent spaces of $N \Sigma$. This yields local coordinates for $T N \Sigma$ and $T^* N \Sigma$ in the usual way. We will denote by $(x,y,p,r) \in (\R^2 \times \R)^2$ the coordinates of the cotangent vector $\sum_1^2 p_j dx_j + r dy$ in the cotangent space at $(\sigma(x), y \nu(\sigma(x)))$.

Let $\mathcal{U}$ be a coordinate domain on $\Sigma$. We will assume that each coordinate domain on $N \Sigma$ and $T^* N \Sigma$ have the form $\set{(\sigma,n)}{\sigma \in \mathcal{U} \ \text{and} \ n \in N_\sigma \Sigma}$ and
\[
  \begin{split}
    \{ \ (\sigma, n, \xi, \eta) \mid \ & \sigma \in \mathcal{U}, \ n \in N_\sigma \Sigma, \\
    & \xi \in T_{\sigma,n}^* N \Sigma \ \text{is horizontal}, \ \eta \in T_{\sigma,n}^* N \Sigma \ \text{is vertical} \ \}.
  \end{split}
\]
Since $\Sigma$ is compact, we can assume that the atlases for $N \Sigma$ and $T^* N \Sigma$ have finitely many charts. Furthermore, there exists $\varepsilon > 0$ such that two points in $N \Sigma$ belong to a single coordinate domain if their projections onto $\Sigma$ are distant from each other by at most $\varepsilon$. The same holds for $T^* N \Sigma$.

Let
\begin{equation} \label{pu}
  \{ \chi_j \} = \{ \chi_j(\sigma) \}
\end{equation}
be a smooth partition of unit subordinated to a regular finite cover $\{ \mathcal{V}_j \}$ of $N \Sigma$. (We can assume that $\chi_j$ depends only on $\sigma$.) This means that each $\chi_j$ is a smooth function on $N\Sigma$ supported in a single coordinate domain $\mathcal{V}_j$ on $N \Sigma$ with $0 \le \chi_j \le 1$ and $\sum_{j=1}^m \chi_j^2 = 1$ with $m < \infty$. (Note the difference with the standard definition.) Observe that $\sum_{j=1}^m \chi_j \grad(\chi_j)=0$.

\subsection{Metrics}

Denote by $\sigma_j(x)$ the vector $\partial \sigma(x)/\partial x_j \in \R^3$ for $j=1,2$. This vector corresponds to the tangent vector $\partial/\partial x_j \in T_\sigma \Sigma$. Thus, the $2 \times 2$ matrix for the metric on $\Sigma$ is
\[
  G_\Sigma = [\langle \sigma_i, \sigma_j \rangle].
\]
(We will denote a matrix with entries $X_{ij}$ by $[X_{ij}]$.)

The tangent vectors $\partial/\partial x_j$ and $\partial/\partial y$ in $T_{\sigma,n} N \Sigma$ correspond to the vectors $(\sigma_j, y d\nu[\sigma_j])$ and $(0,\nu(\sigma))$ in $\R^3 \times \R^3$ for $j=1,2$. Here $\sigma_j = \sigma_j(x)$ and $\nu(\sigma) = \nu(\sigma(x))$. Thus, using the inner product in \eqref{ip}, and observing that $\langle \sigma_j, \nu(\sigma) \rangle = 0$ and $\langle d\nu[\sigma_j], \nu(\sigma) \rangle = 0$, we find that, on $N \Sigma_\delta$,
\[
  \left \langle \frac{\partial}{\partial x_j}, \frac{\partial}{\partial y} \right \rangle = \left \langle \sigma_j + y d\nu[\sigma_j], \nu(\sigma) \right \rangle = 0
\]
and
\[
  \left \langle \frac{\partial}{\partial x_i}, \frac{\partial}{\partial x_j}\right \rangle = \left \langle \sigma_i + y d\nu[\sigma_i], \sigma_j + y d\nu[\sigma_j] \right \rangle = (G_\Sigma)_{ij} + C_{ij}
\]
with
\[
  C_{ij} \coloneqq y ( \langle \sigma_i, d\nu[\sigma_j] \rangle + \langle d\nu[\sigma_i], \sigma_j \rangle ) + y^2 \langle d\nu[\sigma_i], d\nu[\sigma_j] \rangle.
\]
Furthermore,
\[
  \left \langle \frac{\partial}{\partial y}, \frac{\partial}{\partial y} \right \rangle = \langle \nu(\sigma), \nu(\sigma) \rangle = 1.
\]
Therefore, the $3 \times 3$ matrix for the pulled back metric on $N \Sigma_\delta$ is the block diagonal matrix
\[
  \boxed{
  G = \begin{bmatrix} G_\Sigma + C & 0 \\ 0 & 1 \end{bmatrix},
  }
\]
where $C$ is the $2 \times 2$ matrix with entries $C_{ij}$.

To extend the pulled back metric on $N \Sigma_\delta$ to a complete Riemannian metric on $N \Sigma$, we could join the pulled back metric on $N \Sigma_\delta$ to the metric on $N \Sigma \setminus N \Sigma_{2 \delta}$ given in local coordinates by the block diagonal matrix
\[
  \begin{bmatrix} G_\Sigma & 0 \\ 0 & 1 \end{bmatrix}.
\]
The extended metric on $N \Sigma$ would be given by the block diagonal matrix
\[
  \begin{bmatrix} G_\Sigma + \eta C & 0 \\ 0 & 1 \end{bmatrix},
\]
where $\eta = \eta(y)$ is a smooth function on $\R$ that equals $1$ for $|y| \le \delta$ and $0$ for $|y| \ge 2 \delta$ with $0 \le \eta \le 1$. With this particular extension, the local coordinate expressions below remain true on all $N \Sigma$ if $C$ is replaced by $\eta C$. However, this particular extension is not necessary for our proofs. We thus let $G$ be the matrix for any complete Riemannian metric $g_{N \Sigma}$ on $N \Sigma$ that equals the pulled back metric on $N \Sigma_\delta$. (We remark that the above extension is complete.)

We now look at the quantity $G_\Sigma + C$ in more detail. Consider two vector fields $X$ and $Y$ tangent to $\Sigma$, and denote by $dX[Y]$ the directional derivative of $X$ in the direction of $Y$. The second fundamental form of $\Sigma$ is defined by
\[
  \second(X,Y) = \langle \nu, dY[X] \rangle.
\]
Since the Lie-bracket $[X,Y] \coloneqq dY[X] - dX[Y]$ is tangent to $\Sigma$, we have
\[
  \langle \nu, dY[X] \rangle = \langle \nu, [X,Y] \rangle + \langle \nu, dX[Y] \rangle = \langle \nu, dX[Y] \rangle,
\]
and thus the second fundamental form is symmetric in $X$ and $Y$. The Weingarten map $L$ is the symmetric linear transformation on the tangent spaces of $\Sigma$ defined by the equation
\[
  \second(X,Y) = \langle L X, Y \rangle.
\]
Differentiating the identity $0 = \langle \nu, Y \rangle$, we obtain
\[
  0 = d0[X] = d \langle \nu, Y \rangle[X] = \langle d\nu[X], Y \rangle + \langle \nu, dY[X] \rangle.
\]
Thus
\[
  \langle L X, Y \rangle = \second(X,Y) = \langle \nu, dY[X] \rangle = -\langle d\nu[X], Y \rangle
\]
for all tangent vectors $X$ and $Y$. Therefore $L = -d\nu$ and
\[
  \langle d\nu[\sigma_i], \sigma_j \rangle = - \langle L \sigma_i, \sigma_j \rangle = - \langle \sigma_i, L \sigma_j \rangle = \langle \sigma_i, d\nu[\sigma_j] \rangle.
\]
Consequently
\[
  (G_\Sigma)_{ij} + C_{ij} = \langle \sigma_i, \sigma_j \rangle -2y \langle \sigma_i, L \sigma_j \rangle + y^2 \langle L \sigma_i, L \sigma_j \rangle.
\]
A short calculation shows that
\begin{equation} \label{id1}
  \sum_{k=1}^2 \sum_{l=1}^2 (\sigma_k)_i (G_{\Sigma}^{-1})_{kl} (\sigma_l)_j = \delta_{ij},
\end{equation}
where $\delta_{ij}$ is the Kronecker delta. Using this we find
\[
  [\langle L \sigma_i, L \sigma_j \rangle] = [\langle \sigma_i, L \sigma_j \rangle] G_\Sigma^{-1} [\langle \sigma_i, L \sigma_j \rangle].
\]
Therefore
\[
  \begin{split}
    G_\Sigma + C & = G_\Sigma - 2y[\langle \sigma_i, L \sigma_j\rangle ] + y^2 [\langle \sigma_i, L \sigma_j \rangle] G_\Sigma^{-1} [\langle \sigma_i, L \sigma_j \rangle] \\
    & = G_\Sigma (1 - y G_\Sigma^{-1}[\langle \sigma_i, L \sigma_j \rangle])^2.
  \end{split}
\]

We will need the following expansions for our proofs.

\begin{lem}[Expansion formulae for the metric]
  Let $(\mathcal{V}, (x,y))$ be a chart on $N \Sigma$, and let $\|\cdot\|$ be the operator norm (induced by the Euclidean norm). Then for every $(x,y)$ such that $|y| \|G_\Sigma^{-1}[\langle \sigma_i, L \sigma_j \rangle]\| < 1$, we have
  \[
    (G_\Sigma + C)^{-1} = G_\Sigma^{-1} + e_1
  \]
  and
  \[
    \log \left[ \frac{\det(G_\Sigma)}{\det(G_\Sigma+C)} \right]^{1/4} = \frac{1}{2} y \tr(L) + \frac{1}{4} y^2 \tr(L^2) + e_3.
  \]
  Here $e_j$ denotes a smooth (possibly matrix valued) function of $(x,y)$ that vanishes to $j$th order in $y$ at $(x,0)$. (The function $e_j$ behaves like $y^j$ for small $y$.)
  \label{l:metric}
\end{lem}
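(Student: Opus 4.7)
The plan is to leverage the factorization
\[
  G_\Sigma + C = G_\Sigma (I - y M)^2, \qquad M \coloneqq G_\Sigma^{-1}[\langle \sigma_i, L \sigma_j \rangle],
\]
derived just above the lemma statement. Note that the admissibility condition $|y|\,\|G_\Sigma^{-1}[\langle \sigma_i, L \sigma_j \rangle]\| < 1$ is exactly $|y|\,\|M\| < 1$, which is precisely what is needed to make both a Neumann series and a logarithmic series in $yM$ converge and depend smoothly on $(x,y)$.

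For the first identity I would write
\[
  (G_\Sigma + C)^{-1} = (I - yM)^{-2} G_\Sigma^{-1} = \bigl( I + 2yM + y^2 R(x,y) \bigr) G_\Sigma^{-1},
\]
where $R$ is smooth in the admissible region (obtained from the Neumann expansion of $(I-yM)^{-2}$). The difference $(G_\Sigma+C)^{-1} - G_\Sigma^{-1} = \bigl((I-yM)^{-2} - I\bigr) G_\Sigma^{-1}$ is then a smooth matrix-valued function vanishing at $y=0$, which is exactly the meaning of $e_1$.

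For the second identity I would take determinants and use $\log\det = \tr\log$:
\[
  \log \left[ \frac{\det G_\Sigma}{\det(G_\Sigma+C)} \right]^{1/4}
  = -\tfrac{1}{2} \log \det (I - yM)
  = -\tfrac{1}{2} \tr \log (I - yM)
  = \sum_{k \ge 1} \frac{y^k}{2k} \tr(M^k).
\]
It then remains to identify $\tr(M) = \tr(L)$ and $\tr(M^2) = \tr(L^2)$, and to absorb $k \ge 3$ into $e_3$. Writing $L\sigma_j = \sum_k L^k{}_j\, \sigma_k$ for the matrix of $L$ in the tangent basis $\{\sigma_1,\sigma_2\}$ gives $\langle \sigma_i, L\sigma_j\rangle = \sum_k (G_\Sigma)_{ik}\, L^k{}_j$, so $[\langle \sigma_i, L\sigma_j\rangle] = G_\Sigma \cdot [L^k{}_j]$ and hence $M = [L^k{}_j]$, the coordinate matrix of $L$. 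Since trace is basis-independent, $\tr(M^k) = \tr(L^k)$ for every $k$. The tail $\sum_{k\ge 3} \frac{y^k}{2k}\tr(M^k)$ is a convergent power series in $y$ with coefficients smooth in $x$, hence a smooth function of $(x,y)$ vanishing to third order at $y=0$, i.e.\ an $e_3$.

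The only point that requires any care is the identification $M \leftrightarrow L$ via the inner product, and this follows immediately from the identity \eqref{id1}, so I would not expect any genuine obstacle; the factorization does all the work, and the rest is bookkeeping on Neumann and logarithmic series.
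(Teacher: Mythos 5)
Your proposal is correct and follows essentially the same route as the paper: exploit the factorization $G_\Sigma + C = G_\Sigma(I-yM)^2$, invert via Neumann series for the first identity, and use $\log\det = \tr\log$ with the power-series expansion of $\log(I-yM)$ for the second. Your explicit identification of $M = G_\Sigma^{-1}[\langle\sigma_i,L\sigma_j\rangle]$ as the coordinate matrix of the Weingarten map $L$ (hence $\tr(M^k)=\tr(L^k)$ for every $k$) is a clean way to make rigorous a step that the paper only partially states (it asserts $\tr(M)=\tr(L)$ but uses $\tr(M^2)=\tr(L^2)$ silently).
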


\begin{proof}
Set $M = G_\Sigma^{-1} [\langle \sigma_i, L \sigma_j \rangle]$ and note that $M$ depends only on $x$. Then calculating we find $\det(1-yM) = 1 + e_1$ and $(1-yM)^{-1} = 1 + e_1$. Hence $(G_\Sigma + C)^{-1} = (1-yM)^{-2} G_\Sigma^{-1} = G_\Sigma^{-1} + e_1$. This proves the first formula. A short calculations using \eqref{id1} shows that $\tr(M)=\tr(L)$. Using this, the identity $\log(\det(X)) = \tr(\log(X))$ (which holds for any positive definite real matrix $X$), and the expansion $\log(1-yM) = -yM - y^2 M^2/2 + e_3$, we obtain the second formula.
\end{proof}

\subsection{Projections}

The metric provides an identification between $T_{\sigma,n} N \Sigma$ and $T_{\sigma,n}^* N \Sigma$ given by the map $Z \mapsto g_{N \Sigma}(Z, \, \cdot \,)$. This map is an isomorphism and has matrix $G$ (with respect to the standard bases). Its inverse has matrix $G^{-1}$.

Denote by $P_V$ and $P_H$ the projections of the tangent spaces of $N \Sigma$ onto the vertical and horizontal subspaces. Observe that, the vertical subspace of $T_{\sigma,n} N \Sigma$ is the span of $\partial/\partial y$. Thus, on $N \Sigma_\delta$, since $\langle \partial/\partial x_j, \partial/\partial y \rangle = 0$ for $j=1,2$, and the projections are orthogonal,
\[
  P_V = \begin{bmatrix} 0 & 0 \\ 0 & 1 \end{bmatrix} \qquad \text{and} \qquad P_H = \begin{bmatrix} I & 0 \\ 0 & 0 \end{bmatrix}.
\]
This induces a decomposition of cotangent vectors. Denote by $P^V$ and $P^H$ the projections of the cotangent spaces of $N \Sigma$ onto the vertical and horizontal subspaces. Then
\[
  \boxed{
  P^V = G P_V G^{-1} = \begin{bmatrix} 0 & 0 \\ 0 & 1 \end{bmatrix} \qquad \text{and} \qquad P^H = G P_H G^{-1} = \begin{bmatrix} I & 0 \\ 0 & 0 \end{bmatrix},
  }
\]
where in both expressions the second equality holds only on $N \Sigma_\delta$ because of the explicit form of the metric.

The mapping $d\pi_{\sigma,n} : T_{\sigma,n} N \Sigma \to T_\sigma \Sigma$ maps $\partial/\partial x_j \in T_{\sigma,n} N \Sigma$ to $\partial/\partial x_j \in T_\sigma \Sigma$ and maps $\partial/\partial y \in T_{\sigma,n} N \Sigma$ to $0 \in T_\sigma \Sigma$. Thus $J_{\sigma,n}$ (which is defined in \eqref{J}) maps $dx_j \in T_{\sigma,n}^* N \Sigma$ to $dx_j \in T_\sigma^* \Sigma$.

Therefore, if $(\sigma, n, \xi, \eta) \in T^* N \Sigma_\delta$ has coordinates $(x, y, p, r) \in (\R^2 \times \R)^2$, then $\xi$ and $\eta$ in $T_{\sigma,n}^* N \Sigma_\delta$ have coordinates in $\R^2 \times \R$ given by
\[
  P^H \begin{bmatrix} p \\ r \end{bmatrix} = \begin{bmatrix} p \\ 0 \end{bmatrix} \qquad \text{and} \qquad P^V \begin{bmatrix} p \\ r \end{bmatrix} = \begin{bmatrix} 0 \\ r \end{bmatrix}.
\]
Consequently, $J_{\sigma,n} \xi \in T_\sigma^* \Sigma$ has coordinates $p \in \R^2$.

\subsection{Hamiltonians} \label{s:H}

In this subsection, we first introduce some notation, then we show how to implement the gauge transformation mentioned earlier, and finally we write the Hamiltonians $H_\lambda$ and $L_\lambda$ in local coordinates.

Define $g_\Sigma(x) = \det(G_\Sigma(x))$ and $g(x,y) = \det(G(x,y))$. Set
\[
  D_x = \begin{bmatrix} i \partial/\partial x_1 \\ i \partial/\partial x_2 \end{bmatrix}, \qquad D_y = i \partial/\partial y, \qquad D = \begin{bmatrix} D_x \\ D_y \end{bmatrix}.
\]
Denote by $X^T$ the transpose of the matrix or column vector $X$. Denote by $A = [A_1 \ A_2 \ A_3]^T$ the local coordinates of the magnetic vector potential as a \emph{cotangent} vector field on $N \Sigma$. Let $A_H$ and $A_V$ be defined by the equations
\begin{equation} \label{AHV}
  \begin{bmatrix} A_H \\ Y \end{bmatrix} = P^H A \qquad \text{and} \qquad \begin{bmatrix} X \\ A_V \end{bmatrix} = P^V A.
\end{equation}
(Note that $Y$ and $X$ are equal to zero on $N\Sigma_\delta$.)
Let $A' = A - d\gamma$, and let $A_H'$ and $A_V'$ be defined similarly as in \eqref{AHV} with $A$ replaced by $A'$. Note that, on $N \Sigma_\delta$, we have $A_H = [A_1 \ A_2]^T$ and $A_V = A_3$, so that $A = [A_H \ A_V]^T$.

We next describe how to implement the gauge transformation mentioned earlier. More precisely, we will prove the following lemma:

\begin{lem}[Gauge condition]
  Let $A$ be a smooth cotangent vector field on $N\Sigma$, and let $P^H$ and $P^V$ as above. Then there exists a real valued function $\gamma \in C^\infty(N\Sigma)$ such that $P^V(A - \partial \gamma)=0$ and $P^H \partial \gamma = e_1$ on $N \Sigma_\delta$, where $e_1$ is smooth and vanishes to first order in $y$ at $(x,0)$ (in local coordinates).
\end{lem}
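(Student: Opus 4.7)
The plan is to define $\gamma$ intrinsically by integrating the $1$-form $A$ along the straight-line fibers of $\pi : N\Sigma \to \Sigma$ from the zero section. Explicitly, set
\[
  \gamma(\sigma,n) = \int_0^1 A_{(\sigma,tn)}\bigl((0,n)\bigr)\,dt,
\]
where $(0,n) \in T_{(\sigma,tn)}N\Sigma$ is the canonical vertical vector associated to $n$ under the identification of the vertical subspace with the fiber. Because $A$ is smooth on $N\Sigma$ and the integrand depends smoothly on $(\sigma,n,t)$, $\gamma$ is a smooth real-valued function on $N\Sigma$, defined without any choice of local normal.

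In a local chart $(\mathcal{V},(x,y))$ on $N\Sigma_\delta$ with $A = A_1\,dx_1 + A_2\,dx_2 + A_3\,dy$, the substitution $s = ty$ yields
\[
  \gamma(x,y) = \int_0^y A_3(x,s)\,ds.
\]
I would then read off the two conditions directly from this formula. By the fundamental theorem of calculus, $\partial_y \gamma = A_3$ on $N\Sigma_\delta$, so that
\[
  d\gamma = \partial_{x_1}\gamma\,dx_1 + \partial_{x_2}\gamma\,dx_2 + A_3\,dy.
\]
Applying the boxed formula $P^V = \begin{bmatrix}0&0\\0&1\end{bmatrix}$ on $N\Sigma_\delta$ gives $P^V d\gamma = A_3\,dy = P^V A$, hence $P^V(A - d\gamma) = 0$ on $N\Sigma_\delta$. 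For the horizontal part, differentiation under the integral sign gives
\[
  \partial_{x_j}\gamma(x,y) = \int_0^y \partial_{x_j}A_3(x,s)\,ds = y\int_0^1 \partial_{x_j}A_3(x,ty)\,dt,
\]
which is a smooth function of $(x,y)$ vanishing to first order in $y$ at $(x,0)$. Thus $P^H d\gamma = e_1$ on $N\Sigma_\delta$ with the required property.

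The only conceptual subtlety is global well-definedness: the local coordinate $y = \langle \nu(\sigma),n\rangle$ requires a choice of local unit normal $\nu$, which need not exist globally if $\Sigma$ is non-orientable. Under a change $\nu \mapsto -\nu$, we have simultaneously $y \mapsto -y$ and $A_3 \mapsto -A_3$, and a short substitution shows that the local integral formula is invariant. This check is bypassed entirely by the intrinsic definition above, since evaluation of the $1$-form $A$ on the canonical vertical vector $(0,n)$ along the radial path $t\mapsto(\sigma,tn)$ does not depend on any local frame. I expect this coordinate-invariance point to be the only thing requiring care; the verification of the two boxed conditions is then purely a computation in a single chart.
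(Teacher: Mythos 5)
Your proof is correct, and on $N\Sigma_\delta$ it arrives at the same local formula $\gamma(x,y) = \int_0^y A_3(x,s)\,ds$ as the paper, but by a genuinely cleaner route. The paper constructs $\gamma$ by gluing chart-local integrals $\gamma_j(x,y) = \int_0^y A_V(x,s)\,ds$ with the partition of unity, $\gamma = \sum_j \chi_j^2 \gamma_j$, and then must verify separately that the $\gamma_j$ agree on chart overlaps inside $N\Sigma_\delta$ (a coordinate-change computation using $\partial\tilde{x}_l/\partial y = 0$ and $\partial\tilde{y}/\partial y = 1$) in order to recover the closed-form local expression. Your intrinsic definition --- the line integral of $A$ along the radial fiber path $t \mapsto (\sigma, tn)$ against the canonical vertical lift of $n$ --- is manifestly coordinate-free, so it produces a global smooth function on $N\Sigma$ at once; the well-definedness check disappears, and the only remaining work is the one-chart substitution $s = ty$. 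From there the verification is the same as the paper's: the fundamental theorem of calculus gives $\partial_y\gamma = A_3$, hence $P^V(A - d\gamma) = 0$ on $N\Sigma_\delta$, and differentiation under the integral sign gives $\partial_{x_j}\gamma = y\int_0^1 \partial_{x_j}A_3(x,ty)\,dt = e_1$, matching the paper's appeal to Taylor's theorem. What the paper's partition-of-unity version buys is consistency with the chart-by-chart machinery used throughout the rest of the analysis; what yours buys is a more transparent global construction that also handles non-orientable $\Sigma$ without an explicit sign-change check.
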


\begin{proof}
Let $\{\chi_j\}$ be the partition of unity given in \eqref{pu}. We set $\gamma = \sum_{j=1}^m \chi_j^2 \gamma_j^{}$ with $\gamma_j$ defined on each chart $(\mathcal{V}_j, (x,y))$ on $N \Sigma$ by $\gamma_j(x,y) = \int_0^y A_V(x,s) ds$. Thus $\gamma$ is well-defined on $N\Sigma$. Let us show that
\[
  \gamma(x,y) = \int_0^y A_3(x,s) ds
\]
on every chart on $N \Sigma_\delta$. Let $(\mathcal{V}_j, (x,y))$ and $(\mathcal{V}_k, (\tilde{x}, \tilde{y}))$ be two different charts on $N \Sigma$ that overlap each other. Recall that $A_V = A_3$ on $N\Sigma_\delta$. Note that, on $\mathcal{V}_j \cap \mathcal{V}_k \cap N \Sigma_\delta$, by performing a coordinate transformation (for covectors), we have
\[
  \begin{split}
    \gamma_j(x,y) & = \int_0^y A_3(x,s) ds = \int_0^{\tilde{y}} \Big( \sum_{l=1}^2 \frac{\partial \tilde{x}_l}{\partial y} \tilde{A}_l(\tilde{x}, s) + \frac{\partial \tilde{y}}{\partial y} \tilde{A}_3(\tilde{x}, s) \Big) ds \\
    & = \int_0^{\tilde{y}} \tilde{A}_3(\tilde{x}, s) ds = \gamma_k(\tilde{x}, \tilde{y}),
  \end{split}
\]
where we used that $\partial \tilde{x}_l/\partial y = 0$ for $l=1,2$, and $\partial \tilde{y}/\partial y = 1$. Thus $\gamma_j$ is invariant by coordinate transformations on $N\Sigma_\delta$. This leads to the above expression for $\gamma$ and thus proves the claim.

We finally get to the gauge transformation. Recall that $A = [A_H \ A_V]^T$ on $N\Sigma_\delta$. Then with the above definition for $\gamma$, the substitution $A \leadsto A - \partial \gamma$ translates into
\[
  \boxed{
  \begin{bmatrix} A_H \\ A_V \end{bmatrix} \leadsto \begin{bmatrix} A_H' \\ 0 \end{bmatrix},
  }
\]
where
\[
  A_H'(x,y) \coloneqq A_H^{}(x,y) - \int_0^y \partial_x^{} A_V^{}(x,s) ds = A_H(x,y) + e_1.
\]
Here, the last equality follows by Taylor's Theorem.
\end{proof}

We next give the local coordinate expressions for the Hamiltonians.

The local coordinate expression in $N \Sigma$ for the operator $H_\lambda$ given in \eqref{Hlam}~is
\[
  H_\lambda = (g^{-1/2} D g^{1/2} + A)^T G^{-1} (D + A) + V + \lambda^4 W.
\]
Thus in $N \Sigma_\delta$ we may use the explicit form of the metric to obtain
\[
  \begin{split}
    H_\lambda & = (g^{-1/2} D_x g^{1/2} + A_H)^T(G_\Sigma + C)^{-1}(D_x + A_H) + V \\
    & \quad + (g^{-1/2} D_y g^{1/2} + A_V)^T(D_y + A_V) + \lambda^4 W.
  \end{split}
\]
Consequently
\[
  \begin{split}
    S_\gamma^* H_\lambda S_\gamma & = (g^{-1/2} D_x g^{1/2} + A_H')^T(G_\Sigma + C)^{-1}(D_x + A_H') + V \\
    & \quad + g^{-1/2} D_y^T g^{1/2} D_y + \lambda^4 W.
  \end{split}
\]

We now calculate the local coordinate expression for $L_\lambda^{} = U_\lambda^* S_\gamma^* H_\lambda^{} S_\gamma U_\lambda^{}$ with $U_\lambda = D_\lambda M_\lambda$ (recall the definitions in Section \ref{s:main}). The densities on $N \Sigma$ are given in local coordinates by
\begin{equation} \label{dens}
  \begin{aligned}
    \dvol & = \sqrt{g(x,y)} |dx \wedge dy|, \\
    \dvol_\lambda & = \sqrt{g(x,y/\lambda)} |dx \wedge dy|, \\
    \dvol_{N\Sigma} & = \sqrt{g(x,0)} |dx \wedge dy|.
  \end{aligned}
\end{equation}
Thus the transformation $M_\lambda$ is multiplication by
\[
  m_\lambda(x,y) \coloneqq \left[ \frac{g(x,0)}{g(x,y/\lambda)} \right]^{1/4}.
\]
When conjugation by $D_\lambda$ acts on functions, it results into composition with $d_\lambda^{-1}$. When it acts on $D_y$, it replaces it by $\lambda D_y$. Conjugation by $M_\lambda$ results in multiplication by $m_\lambda$ to the right, and multiplication by $m_\lambda^{-1}$ to the left. Set
\begin{gather*}
  G_\lambda(x,y) = \begin{bmatrix} I & 0 \\ 0 & \lambda^{-1} \end{bmatrix} G(x,y/\lambda) \begin{bmatrix} I & 0 \\ 0 & \lambda^{-1} \end{bmatrix}, \\
  A_\lambda^{(\cdot)}(x,y) = \begin{bmatrix} I & 0 \\ 0 & \lambda^{-1} \end{bmatrix} A^{(\cdot)}(x,y/\lambda),
\end{gather*}
\vspace{-2em}
\begin{alignat*}{2}
  C_\lambda(x,y) & = C(x,y/\lambda), & \qquad V_\lambda(x,y) & = V(x,y/\lambda), \\
  W_\lambda(x,y) & = W(x,y/\lambda), & \qquad k_\lambda(x,y) & = \log m_\lambda(x,y),\\
  g_\lambda(x,y) & = g(x,y/\lambda), & \qquad g_\infty(x,y) & = g(x,0).
\end{alignat*}
Let $A_{\lambda H}$ and $A_{\lambda V}$ be defined as in \eqref{AHV} with $A$ replaced by $A_\lambda$. For any function $f$ on $N \Sigma$ define, in local coordinates,
\begin{gather*}
  \partial f \coloneqq [ \partial_x f \ \partial_y f]^T \coloneqq [ \partial f/\partial x_1 \ \partial f/\partial x_2 \ \partial f/\partial y ]^T, \\
  \partial^2_y f \coloneqq \partial^2 f/\partial y^2.
\end{gather*}
Let $X$ be a real valued multiplication operator. We will denote by $D_{(\cdot)}^*$ and $X^*$ the formal adjoints of $D_{(\cdot)}$ and $X$ with respect to the inner product $\langle \psi, \phi \rangle = \int_{N \Sigma} \overline{\psi} \varphi \dvol_{N \Sigma}$ on $L^2(N \Sigma, \dvol_{N \Sigma})$. Thus $D_x^* = g_\infty^{-1/2} D_x^T g_\infty^{1/2}$, $D_y^* = g_\infty^{-1/2} D_y^T g_\infty^{1/2} = D_y^T = D_y$, and $X^* = g_\infty^{-1/2} X^T g_\infty^{1/2} = X^T$. (Note that $D_{(\cdot)}$ contains a factor $i$.) Hence
\begin{equation} \label{Llam1}
  \begin{split}
    & L_\lambda^{} = m_\lambda^{-1} (g_\lambda^{-1/2} D g_\lambda^{1/2} + A_\lambda')^T G_\lambda^{-1} (D + A_\lambda') m_\lambda^{} + V_\lambda^{} + \lambda^4 W_\lambda^{} \\
    & = (D + i \partial k_\lambda^{} + A_\lambda')^* G_\lambda^{-1} (D + i\partial k_\lambda^{} + A_\lambda') + V_\lambda^{} + \lambda^4 W_\lambda^{} \\
    & = (D + A_\lambda')^* G_\lambda^{-1} (D +\! A_\lambda') + V_\lambda^{} + \lambda^4 W_\lambda^{} + \partial k_\lambda^* G_\lambda^{-1} \partial k_\lambda^{} + \{D^* G_\lambda^{-1} i\partial k_\lambda^{}\},
  \end{split}
\end{equation}
where in the last equality we used that $\partial k_\lambda^* G_\lambda^{-1} A_\lambda' = A_\lambda'^* G_\lambda^{-1} \partial k_\lambda^{}$. Here, terms of the form $\{ D_{(\cdot)}^* \cdots\}$ (written with curly brackets to distinguish from others) are multiplication operators (we will use this notation from now on). Thus in $N \Sigma_{\lambda \delta}$ we may use the explicit form of the metric to obtain
\begin{equation} \label{Llam2}
  \begin{split}
    L_\lambda^{} & = (g_\Sigma^{-1/2} m_\lambda^{} D_x^{} m_\lambda^{-1} g_\Sigma^{1/2} + A_{\lambda H}')^T (G_\Sigma^{} + C_\lambda^{})^{-1} (m_\lambda^{-1} D_x^{} m_\lambda^{} + A_{\lambda H}') + V_\lambda \\
    & \quad + \lambda^2 \big[ m_\lambda^{} D_y^T m_\lambda^{-1} m_\lambda^{-1} D_y^{} m_\lambda^{} + \lambda^2 W_\lambda^{} \big] \\
    & = (D_x^{} + A_{\lambda H}')^* (G_\Sigma^{} + C_\lambda^{})^{-1} (D_x^{} + A_{\lambda H}') + V_\lambda^{} \\
    & \quad + \partial_x^{} k_\lambda^* (G_\Sigma^{} + C_\lambda^{})^{-1} \partial_x^{} k_\lambda^{} + \{D_x^* (G_\Sigma^{} + C_\lambda^{})^{-1} i\partial_x^{} k_\lambda^{}\} \\
    & \quad + \lambda^2 \big[ D_y^* D_y^{} + (\partial_y^{} k_\lambda^{})^2 - \partial^2_y k_\lambda^{} + \lambda^2 W_\lambda^{} \big].
  \end{split}
\end{equation}

We can now comment on the origin of the Hamiltonian $H_\Sigma + \lambda^2 H_{O,\lambda}$. By setting $C_\lambda = 0$ and $m_\lambda = 1$ in the above expression for $L_\lambda$, we obtain the local coordinate expression for the following operator acting on a function:
\[
  i \divv_\lambda( i \grad_\lambda(\psi) + A_\lambda' \psi ) + \langle A_\lambda', i \grad_\lambda(\psi) + A_\lambda' \psi \rangle_\lambda + V_\lambda + \lambda^4 W_\lambda.
\]
Note the presence of the metric defined by $\langle \cdot, \cdot \rangle_\lambda$. Further expansion of $m_\lambda$ and the potentials will reveal that, to leading order, this operator is exactly $H_\Sigma + \lambda^2 H_{\lambda, O}$.

\subsection{\texorpdfstring{Large $\lambda$ expansions}{Large lambda expansion}}

Our next step is performing a large $\lambda$ expansion in $L_\lambda$.

To keep track of error terms we will write $E_j$ to denote a smooth (possibly matrix valued) function of $(x,y)$ that vanishes to $j$th order in $y$ at $(x,0)$, evaluated at $(x,y/\lambda)$. (The function $E_j$ behaves like $(y/\lambda)^j$ for small $y/\lambda$.) The derivatives of these functions behave as
\begin{equation} \label{Ej}
  \partial_x E_j = E_j, \qquad \partial_y E_0 = \lambda^{-1} E_0, \qquad \partial_y E_j = \lambda^{-1} E_{j-1} \ \text{for} \ j \ge 1.
\end{equation}

The Gaussian curvature and the mean curvature on $\Sigma$ are defined by
\[
  s = \det(L) \qquad \text{and} \qquad h = \frac{1}{2} \tr(L).
\]
We set
\begin{gather*}
  w(\sigma) = \nu(\sigma)^T (\hess_n W)(\sigma,0) \nu(\sigma), \\
  f_1(\sigma) = \sum_{|\alpha|=3} a_\alpha(\sigma) \nu(\sigma)^\alpha \qquad \text{and} \qquad f_2(\sigma) = \sum_{|\beta|=4} b_\beta(\sigma) \nu(\sigma)^\beta, \\
  F_\lambda = \lambda^{-1} f_1 y^3 + \lambda^{-2} f_2 y^4.
\end{gather*}

We will need the following expansions for our proofs.

\begin{lem}[Expansion formulae] \label{l:exp}
  Assume the hypotheses of Theorem \ref{t:main}, let $(\mathcal{V}, (x,y))$ be a chart on $N \Sigma_{\lambda \delta}$, and let $\|\cdot\|$ be the operator norm (induced by the Euclidean norm). Then for all $(x,y/\lambda)$ with $|y/\lambda| \|G_\Sigma^{-1}[\langle \sigma_i, L \sigma_j \rangle]\| < 1$, we have
  \[
    \begin{gathered}
      A_{\lambda H}' = A_H^{}(x,0) + E_1^{}, \\
      V_\lambda = V(x,0) + E_1, \\
      (G_\Sigma^{} + C_\lambda^{})^{-1} = G_\Sigma^{-1} + E_1^{},
    \end{gathered}
  \]
  and
  \begin{equation} \label{l:exp2}
    \begin{gathered}
      \lambda^2 W_\lambda = \tfrac{1}{2} w \, y^2 + F_\lambda + \lambda^2 E_5, \\
      \partial_y \lambda^2 W_\lambda = w y + \lambda E_2, \\
      \partial_{x_j} W_\lambda = E_4, \\
      y < C(1 + \lambda^2 W_\lambda)^{1/2},
    \end{gathered}
  \end{equation}
  and
  \begin{equation} \label{l:exp3}
    \begin{gathered}
      \partial_x k_\lambda = E_1, \\
      \partial_y^j k_\lambda^{} = \lambda^{-j} E_0^{} \ \text{for} \ 1 \le j \le 2, \\
      \lambda^2 \Big[ (\partial_y^{} k_\lambda^{})^2 - \partial^2_y k_\lambda^{} \Big] = K + E_1^{}, \\
      \partial_x k_\lambda^* (G_\Sigma + C_\lambda)^{-1} \partial_x k_\lambda^{} + \{ D_x^* (G_\Sigma + C_\lambda)^{-1} i \partial_x k_\lambda^{} \} = E_1^{}, \\
      \partial k_\lambda^* G_\lambda^{-1} \partial k_\lambda^{} + \{ D^* G_\lambda^{-1} i \partial k_\lambda^{} \} = K + E_1^{}.
    \end{gathered}
  \end{equation}
\end{lem}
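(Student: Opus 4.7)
The proof is a sequence of Taylor expansions in $y$ about $(x,0)$, combined with the substitution $y \leadsto y/\lambda$ (which converts a smooth function vanishing to order $j$ in $y$ at $y=0$ into an $E_j$) and the derivative rules \eqref{Ej}. I would organize the assertions into three groups and handle them in order.

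\textbf{The bounded quantities and the potential.} The expansions of $A_{\lambda H}'$, $V_\lambda$, and $(G_\Sigma+C_\lambda)^{-1}$ are each the unscaled object evaluated at $(x,y/\lambda)$, and they follow, respectively, from the gauge identity $A_H' = A_H + e_1$ combined with a first-order Taylor expansion of $A_H$, a first-order Taylor expansion of $V$, and a direct specialization of Lemma~\ref{l:metric}. For the block \eqref{l:exp2} I Taylor expand $W(x,\cdot)$ to fifth order at $y=0$: hypothesis~(ii) of Theorem~\ref{t:main} kills the constant and linear terms, the quadratic term is $\tfrac{1}{2}w(x)y^2$ by the definition of $w$, the cubic and quartic terms produce $F_\lambda$ after the substitution $y \leadsto y/\lambda$ and multiplication by $\lambda^2$, and the fifth-order remainder becomes $\lambda^2 E_5$. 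Differentiating in $y$ and applying \eqref{Ej} termwise yields $\partial_y \lambda^2 W_\lambda = wy + \lambda E_2$. The identity $\partial_{x_j} W_\lambda = E_4$ is the only point where hypothesis~(iii) is essential: in local coordinates it asserts $\partial_{x_j} W(x,y) = O(y^5)$, so the substitution gives an $E_5 \subset E_4$. The inequality follows at once from hypothesis~(i), which gives $\lambda^2 W_\lambda \ge \kappa y^2$.

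\textbf{The $k_\lambda$ expansions.} Since $m_\lambda = [g(x,0)/g(x,y/\lambda)]^{1/4}$ and $g = \det(G_\Sigma + C)$ on $N\Sigma_\delta$, applying Lemma~\ref{l:metric} with $y$ replaced by $y/\lambda$ gives
\[
  k_\lambda = (y/\lambda)h + \tfrac{1}{4}(y/\lambda)^2\tr(L^2) + E_3,
\]
and the first two identities of \eqref{l:exp3} follow by differentiating and invoking \eqref{Ej}. The main calculation is
\[
  \lambda^2\big[(\partial_y k_\lambda)^2 - \partial_y^2 k_\lambda\big] = h^2 - \tfrac{1}{2}\tr(L^2) + E_1,
\]
which becomes $K + E_1$ once one uses the identity $h^2 - \tfrac{1}{2}\tr(L^2) = s - h^2$, valid for any symmetric $2\times 2$ matrix $L$ (both sides equal $-\tfrac{1}{4}(\lambda_1 - \lambda_2)^2$, where $\lambda_1, \lambda_2$ are the principal curvatures).

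\textbf{The combined expressions and main obstacle.} Split $\partial k_\lambda^* G_\lambda^{-1} \partial k_\lambda + \{D^* G_\lambda^{-1} i \partial k_\lambda\}$ along the block-diagonal decomposition $G_\lambda^{-1} = \operatorname{diag}((G_\Sigma + C_\lambda)^{-1}, \lambda^2)$ valid on $N\Sigma_{\lambda\delta}$. The horizontal part equals the fourth identity of \eqref{l:exp3} and is $E_1$ because $\partial_x k_\lambda = E_1$ and $(G_\Sigma + C_\lambda)^{-1}$ is bounded. The vertical part consists of $\lambda^2(\partial_y k_\lambda)^2$ and $-\lambda^2 \partial_y^2 k_\lambda$ (the latter coming from $\{D_y^* \lambda^2 i \partial_y k_\lambda\}$ since $D_y^* = D_y = i\partial_y$), whose sum was just identified with $K + E_1$. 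Combining these yields the last identity. The chief obstacle is the bookkeeping dictated by \eqref{Ej}: $\partial_x$ preserves the order of an $E_j$, $\partial_y$ drops one order while producing a $\lambda^{-1}$, and terms like $\partial_y(\lambda^2 E_5) = \lambda E_4$ must be carefully absorbed into the appropriate $E_j$ class.
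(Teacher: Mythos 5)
Your proof is correct and follows essentially the same route as the paper: Taylor expansion in the normal variable at $y=0$ (using hypotheses (i)--(iii) of Theorem~\ref{t:main} to eliminate or control terms), specialization of Lemma~\ref{l:metric} at $(x,y/\lambda)$ to obtain both $(G_\Sigma+C_\lambda)^{-1}$ and the expansion of $k_\lambda$, the derivative rules \eqref{Ej}, and the curvature identity $\tr(L)^2 - \tr(L^2) = 2\det(L)$, which your identity $h^2 - \tfrac12\tr(L^2) = s - h^2$ is an equivalent restatement of. The block-diagonal decomposition of $G_\lambda^{-1}$ to pass from the fourth to the fifth identity of \eqref{l:exp3} is exactly what the paper has in mind.
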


\begin{proof}
Calculating the Taylor series expansions around $(x,0)$ we obtain the formulae for $A_{\lambda H}'$, $V_\lambda$, and $\lambda^2 W_\lambda$, and the estimates for $y$ and the derivatives of $W_\lambda$. To prove the other formulae we use the expressions in Lemma \ref{l:metric} evaluated at $(x,y/\lambda)$. The formula for $(G_\Sigma + C_\lambda)^{-1}$ is immediate. The remaining formulae follow by a short calculation using
\[
  k_\lambda(x,y) = \frac{y}{2 \lambda} \tr(L) + \frac{y^2}{4 \lambda^2} \tr(L^2) + E_3,
\]
the error bounds in \eqref{Ej}, and the identity $\tr(L)^2 - \tr(L^2) = 2\det(L)$. We omit the details.
\end{proof}

Thus, substituting the expansions in Lemma \ref{l:exp} into \eqref{Llam2}, we obtain, in $N \Sigma_{\lambda \delta}$,
\begin{equation} \label{Llame}
  \boxed{
  L_\lambda = L_{0,\lambda} + Q_\lambda,
  }
\end{equation}
where
\[
  L_{0,\lambda} = H_\Sigma + \lambda^2 H_{O,\lambda}
\]
with
\begin{equation} \label{Hs}
  \begin{gathered}
    H_\Sigma^{} = (D_x^{} + A^{}_H(x,0))^* G_\Sigma^{-1} (D_x^{} + A^{}_H(x,0)) + V(x,0) + K, \\
    H_{O,\lambda}^{} = D_y^* D_y^{} + \lambda^2 W_\lambda^{} = D_y^* D_y^{} + \tfrac{1}{2} w \, y^2 + F_\lambda + \lambda^2 E_5^{},
  \end{gathered}
\end{equation}
and
\[
  Q_\lambda^{} = D_x^* E_1^{} D_x + D_x^* E_1^{} + E_1^* D_x^{} + E_1^{}.
\]

To summarize so far, we have written the quantities that we need for our proofs in local coordinates, and we have proved some expansion formulae which yield the large $\lambda$ expansion for $L_\lambda$. We have also proved that the gauge condition can be fulfilled.

\section{\texorpdfstring{Transfer from $\R^3$ to $N\Sigma$}{Transfer from R3 to NSigma}} \label{s:transfer}

For $x \in \R^3$, define the distance from $x$ to $\Sigma$ by
\[
  d(x,\Sigma) = \inf \set{|x-\sigma|}{\sigma \in \Sigma},
\]
and for $c > 0$ set
\[
  U_c = \set{x \in \R^3}{d(x,\Sigma) < c}.
\]
Then $U_\delta$ is the tubular neighborhood of $\Sigma$ that is diffeomorphic to $N \Sigma_\delta$. Let $\M$ be either $\R^3$ or $N\Sigma$, and let $\M_\delta$ be either $U_\delta$ or $N\Sigma_\delta$. Let $\dvol$ be either the density defined by the extended metric on $N\Sigma$ or the Lebesgue measure on $\R^3$. We will denote by $H_\lambda$ the operator given by the expression in \eqref{Hlam} acting on $L^2(\M,\dvol)$, and by $H_\lambda^\delta$ the operator given by the expression in \eqref{Hlam} acting on $L^2(\M_\delta,\dvol)$ with Dirichlet boundary conditions on $\partial M_\delta$.

\begin{prop} \label{p:transfer}
  Assume the hypothesis of Theorem \ref{t:main}. Let $\psi \in L^2(\M,\dvol)$ with $\| H_\lambda \psi \| \le C_1 \lambda^2$ and $\| \psi \|=1$. Let $H_\lambda$ and $H_\lambda^\delta$ be the operators defined above. Then given $\varepsilon > 0$, there exists a constant $C_2 > 0$, depending only on $C_1$ and $\varepsilon$, such that, for all $t \in \R$,
  \begin{equation} \label{t1}
    \| F_{\M \setminus \M_\varepsilon} e^{-itH_\lambda} \psi \| \le C_2 \lambda^{-1}.
  \end{equation}
  Here $F_X$ is the operator of multiplication by the characteristic function with support in $X$. Furthermore, given $T>0$, there exists a constant $C_3>0$ such that, for all $0 < \varepsilon < \delta$,
  \begin{equation} \label{t2}
    \sup_{t \in [0,T]} \| F_{\M_\varepsilon} e^{-itH_\lambda} \psi - e^{-itH_\lambda^\delta} F_{\M_\varepsilon} \psi \| \le C_3 \lambda^{-1/4}.
  \end{equation}
  The constant $C_3$ depends only on $\varepsilon$, $C_1$ and $T$.
\end{prop}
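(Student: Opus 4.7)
The plan is to prove the two bounds separately, with \eqref{t1} following from energy conservation and \eqref{t2} from Duhamel's formula combined with a localized energy estimate.

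For \eqref{t1}, I use that $e^{-itH_\lambda}$ is unitary and commutes with $H_\lambda$, so $\|\psi_t\|=1$ and $\|H_\lambda\psi_t\|=\|H_\lambda\psi\|\le C_1\lambda^2$, where $\psi_t = e^{-itH_\lambda}\psi$. Writing $H_\lambda = (D+A)^*G^{-1}(D+A) + V + \lambda^4 W$ and using $(D+A)^*G^{-1}(D+A)\ge 0$ together with boundedness of $V$, we have $\lambda^4 W \le H_\lambda + \|V\|_\infty$ as quadratic forms. Taking the expectation in $\psi_t$ and applying Cauchy--Schwarz,
\[
  \lambda^4 \langle\psi_t, W\psi_t\rangle \le \|\psi_t\|\,\|H_\lambda\psi_t\| + \|V\|_\infty \le C\lambda^2.
\]
Hypothesis (i) of Theorem~\ref{t:main} (together with the positivity of the extended $W$ away from the tubular neighborhood) yields $W\ge\kappa\varepsilon^2$ on $\M\setminus\M_\varepsilon$, so $\|F_{\M\setminus\M_\varepsilon}\psi_t\|^2 \le (\kappa\varepsilon^2)^{-1}\langle\psi_t, W\psi_t\rangle \le C\lambda^{-2}$, proving \eqref{t1}.

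For \eqref{t2}, I pick a smooth cutoff $\chi\in C_c^\infty(\M)$ with $\chi=1$ on $\M_{\varepsilon/2}$ and $\operatorname{supp}\chi\subset\M_{3\varepsilon/4}\subset\M_\delta$. Then $\chi\psi_t\in D(H_\lambda^\delta)$ and $H_\lambda^\delta\chi\psi_t = H_\lambda\chi\psi_t$ as differential operators, so Duhamel's formula gives
\[
  \chi\psi_t - e^{-itH_\lambda^\delta}\chi\psi = i\int_0^t e^{-i(t-s)H_\lambda^\delta}[H_\lambda,\chi]\psi_s\,ds,
\]
reducing the estimate to bounding $\|[H_\lambda,\chi]\psi_s\|$. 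Replacing $\chi$ by the sharp characteristic function $F_{\M_\varepsilon}$ introduces extra errors of order $\lambda^{-1}$ by \eqref{t1}, which are subsumed by the claimed bound.

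The main obstacle is controlling $\|[H_\lambda,\chi]\psi_s\|$, since $[H_\lambda,\chi]$ is a first-order differential operator with coefficients supported in the annulus $A=\operatorname{supp}\nabla\chi$. Its zeroth-order piece is handled by $\|F_A\psi_s\|\le C\lambda^{-1}$ via \eqref{t1}. The first-order piece requires bounding $\|F_A(D+A)\psi_s\|$, for which I use a localized energy identity: taking a secondary cutoff $\tilde\chi$ equal to $1$ on $A$ and supported where $d(\cdot,\Sigma)\ge\varepsilon/4$,
\[
  \langle\tilde\chi\psi_s, H_\lambda\tilde\chi\psi_s\rangle = \|(D+A)\tilde\chi\psi_s\|^2 + \int V\tilde\chi^2|\psi_s|^2 + \lambda^4\int W\tilde\chi^2|\psi_s|^2,
\]
together with the IMS identity $\tilde\chi H_\lambda\tilde\chi = \tfrac12(\tilde\chi^2 H_\lambda + H_\lambda\tilde\chi^2) + \tfrac12[[\tilde\chi,H_\lambda],\tilde\chi]$, the bound $\|\tilde\chi\psi_s\|\le C\lambda^{-1}$ from \eqref{t1}, and $\|H_\lambda\psi_s\|\le C\lambda^2$, controls $\|F_A(D+A)\psi_s\|$. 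The positive term $\lambda^4\int W\tilde\chi^2|\psi_s|^2$ then provides a strictly improved bound on $\|\tilde\chi\psi_s\|$ itself; bootstrapping this localized estimate, and then using Cauchy--Schwarz in the time integral in Duhamel (which converts a pointwise bound on the commutator into the square root of an $L^2$-in-time average), with the cutoff transition width chosen as an appropriate power of $\lambda$, produces the stated rate $C\lambda^{-1/4}$.
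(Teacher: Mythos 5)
Your proof of \eqref{t1} is correct and follows the same route as the paper: bound $\langle\psi_t,W\psi_t\rangle$ by $C\lambda^{-2}$ using the energy hypothesis, then use the lower bound on $W$ off the tubular neighborhood. Fine.

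Your proof of \eqref{t2} has a genuine gap. After setting up the Duhamel formula, you need to control $\int_0^t\|[H_\lambda,\chi]\psi_s\|\,ds$, and the best one can prove for the first-order part of the commutator is $\|F_A(D+A)\psi_s\|\le C\lambda^{1/2}$ (this is \eqref{tb} in the paper). Your bootstrap from the localized energy identity does give an improving chain of bounds $\|\tilde\chi^{(n)}\psi_s\|\lesssim\lambda^{-(2-2^{-n})}$, but plugging these back into the energy identity yields only $\|(D+A)\tilde\chi^{(n)}\psi_s\|\lesssim\lambda^{2^{-(n+1)}}$, which tends to $O(1)$ but never decays; the fixed point of the iteration is $\|\tilde\chi\psi_s\|\sim\lambda^{-2}$, $\|(D+A)\tilde\chi\psi_s\|\sim 1$. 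So a pointwise-in-time Duhamel estimate, with or without Cauchy--Schwarz in $s$ and with or without a $\lambda$-dependent cutoff width, gives at best $\|\varphi_t\|\lesssim T$, not a decaying bound, let alone $\lambda^{-1/4}$. The Cauchy--Schwarz-in-time step you invoke cannot convert an $L^\infty_s$ bound that does not decay into a decaying one, and shrinking the transition width only makes $|\grad\chi|$ larger.

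What the paper does differently is to estimate $\|\varphi_{t,\lambda}\|^2=\langle\varphi_{t,\lambda},\varphi_{t,\lambda}\rangle$, insert Duhamel for one copy, and crucially insert a cutoff $\tilde\eta$ supported near the transition annulus on the \emph{other} copy: the integrand becomes $\langle\tilde\eta\, e^{-isH_\lambda^\delta}\varphi_{t,\lambda}, (H_\lambda^\delta\eta-\eta H_\lambda)\psi_s\rangle$. Since $\varphi_{t,\lambda}$ itself has bounded $H_\lambda^\delta$-energy (this is \eqref{td}), the analogue of \eqref{t1} for the Dirichlet dynamics gives $\|\tilde\eta\, e^{-isH_\lambda^\delta}\varphi_{t,\lambda}\|\le C\lambda^{-1}$. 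Pairing this $\lambda^{-1}$ with the $\lambda^{1/2}$ from the commutator gives $\|\varphi_{t,\lambda}\|^2\le CT\lambda^{-1/2}$, hence $\lambda^{-1/4}$. This extra $\lambda^{-1}$ gain, coming from the localization of $\varphi_{t,\lambda}$ away from $\Sigma$, is the missing ingredient in your argument; without it the Duhamel integral cannot be made small.
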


\begin{proof}
Let us show \eqref{t1}. By the Cauchy-Schwarz inequality, we have
\[
  \langle \psi, H_\lambda \psi \rangle \le C_1 \lambda^2.
\]
Without loss of generality, we can assume that $V \ge 0$ (see the remark in the paragraph above \eqref{eta1}). Consequently
\begin{equation} \label{upperb}
  \begin{gathered}
    \| \grad(\psi) + A \psi \|^2 \le C_1 \lambda^2, \\
    \langle \psi, W \psi \rangle \le C_1 \lambda^{-2}.
  \end{gathered}
\end{equation}
Thus
\[
  \langle \psi, F_{\M \setminus \M_\varepsilon} \psi \rangle \le (\varepsilon^2 \kappa)^{-1} \langle \psi, F_{\M \setminus \M_\varepsilon} W \psi \rangle \le (\varepsilon^2 \kappa)^{-1} C_1 \lambda^{-2}.
\]
This proves \eqref{t1} because $e^{-itH_\lambda} \psi$ obeys the same hypotheses as $\psi$.

To prove \eqref{t2}, let $\eta$ be a smooth function on $\M$ obeying $0 \le \eta \le F_{\M_{\varepsilon/2}}$ and $\eta = 1$ on $\M_{\varepsilon/4}$. It suffices to show that
\[
  \sup_{t \in [0,T]} \| e^{itH_\lambda^\delta} \eta e^{-itH_\lambda} \psi - \eta \psi \| \le C_3 \lambda^{-1/4}.
\]
This is true because the terms that arise from replacing $F_{\M_\varepsilon}$ by $\eta$ in \eqref{t2} decay like $1/\lambda$ (this follows by \eqref{t1} with $\varepsilon/2$). Set
\[
  \varphi_{t,\lambda} = e^{itH_\lambda^\delta} \eta e^{-itH_\lambda} \psi - \eta \psi.
\]
Then, by writing $\varphi_{t,\lambda}$ as the integral of its derivative, we obtain
\[
  \varphi_{t,\lambda} = i \int_0^t e^{is H_\lambda^\delta} (H_\lambda^\delta \eta - \eta H_\lambda) e^{-isH_\lambda} \psi \, ds,
\]
where
\[
  (H_\lambda^\delta \eta - \eta H_\lambda) \psi = 2 \langle i \grad(\eta), i \grad(\psi) + A \psi \rangle - \divv(\grad(\eta)) \psi.
\]
Let $\{\chi_j\}$ be a partition of unity as in \eqref{pu}, and let $\tilde{\eta}$ be a smooth function on $\M$ that equals $1$ on the support of $\grad(\eta)$ and $0$ on $\M_{\varepsilon/8}$. Then
\begin{equation} \label{ta}
  \begin{split}
    & \| \varphi_{t,\lambda} \|^2 = i \int_0^t \sum_{j=1}^m \langle \tilde{\eta} e^{-is H_\lambda^\delta} \varphi_{t,\lambda}, \tilde{\eta} \chi_j^2 (H_\lambda^\delta \eta - \eta H_\lambda) e^{-isH_\lambda} \psi \rangle ds \\
    & \le C \int_0^t \sum_{j=1}^m \| \chi_j \tilde{\eta} e^{-is H_\lambda^\delta} \varphi_{t,\lambda} \| ( \| \chi_j \tilde{\eta} \{D^T \eta\} G^{-1} (D + A) e^{-isH_\lambda} \psi \| + 1 )ds.
  \end{split}
\end{equation}
Thus, we are left to estimating the two quantities in \eqref{ta}.

Let us consider the second quantity in \eqref{ta}. Let $\mathcal{A} = \M_{\varepsilon/2} \setminus \M_{\varepsilon/4}$, and let $\chi \in \{\chi_j\}$. We will show that
\begin{equation} \label{tb}
  \| F_\mathcal{A} \chi G^{-1/2} (D + A) \psi \| \le C \lambda^{1/2}.
\end{equation}
To establish this, we begin observing that
\[
  \| F_\mathcal{A} \chi G^{-1/2}(D+A) \psi \| = \| F_\mathcal{A} \chi G^{-1/2}(D+A) \tilde{\eta} \psi \| \le \| \chi G^{-1/2}(D+A) \tilde{\eta} \psi \|.
\]
Now, using the Cauchy-Schwarz inequality, and then \eqref{t1}, we find
\[
  \begin{split}
    \| \chi G^{-1/2} (D+A) \tilde{\eta} \psi \|^2 & \le 2 \| \tilde{\chi} \tilde{\eta} \psi \| \| \chi (D+A)^* G^{-1} (D+A) \tilde{\eta} \psi \| + C \\
    & \le C \lambda^{-1} \| \chi (D+A)^* G^{-1} (D+A) \tilde{\eta} \psi \| + C,
  \end{split}
\]
where $\tilde{\chi}$ is like $\chi$, but it is supported on a slightly larger subset. Thus \eqref{tb} follows from
\begin{equation} \label{tc}
  \| \chi (D+A)^* G^{-1} (D+A) \tilde{\eta} \psi \| \le C \lambda^2.
\end{equation}
To prove \eqref{tc}, write
\begin{gather*}
  T = (g^{-1/2} D g^{1/2}+A)^T G^{-1} (D+A), \\
  F = V + \lambda^4 W \qquad \text{and} \qquad S = \{g^{-1/2} D^T g^{1/2} G^{-1} D \chi^2 \tilde{\eta}^2 F\}.
\end{gather*}
Then calculating and exploiting positivity we obtain
\[
  \begin{split}
    H_\lambda^2 & \ge H_\lambda \chi^2 H_\lambda \\
    & = T \chi^2 T + 2(g^{-1/2} D g^{1/2} + A)^T G^{-1/2} \chi F \chi G^{-1/2}(D+A) + (\chi F)^2 + S \\
    & \ge T \chi^2 T - C\lambda^4.
  \end{split}
\]
Hence
\[
  \begin{split}
    \| \chi T \tilde{\eta} \psi \|^2 & \le C \lambda^4 \| \tilde{\eta} \psi \|^2 + \| H_\lambda \tilde{\eta} \psi \|^2 \\
    & \le C \lambda^4 + C \| H_\lambda \psi \|^2 + C \| \grad(\psi) + A \psi \|^2 + C \le C \lambda^4,
  \end{split}
\]
where we used \eqref{upperb} in the last inequality. This proves \eqref{tc} (and thus \eqref{tb}).

We next consider the first quantity in \eqref{ta}. Using the Cauchy-Schwarz inequality and calculating we find
\[
  \begin{split}
    \langle \varphi_{t,\lambda}, H_\lambda^\delta \varphi_{t,\lambda} \rangle & \le 3 \langle \eta e^{-itH_\lambda} \psi, H_\lambda^\delta \eta e^{-itH_\lambda} \psi \rangle + 3 \langle \eta \psi, H_\lambda^\delta \eta \psi \rangle \\
    & = \tfrac{3}{2} \langle e^{-itH_\lambda} \psi, ( H_\lambda \eta^2 + \eta^2 H_\lambda + 2 |\grad(\eta)|^2 ) e^{-itH_\lambda} \psi \rangle \\
    & \quad + \tfrac{3}{2} \langle \psi, ( H_\lambda \eta^2 + \eta^2 H_\lambda + 2 |\grad(\eta)|^2 ) \psi \rangle \\
    & \le C \lambda^2.
  \end{split}
\]
Now, by proceeding exactly as in the proof of \eqref{t1}, we conclude that
\begin{equation} \label{td}
  \| \tilde{\eta} e^{-isH_\lambda^\delta} \varphi_{t,\lambda} \| \le C \lambda^{-1}.
\end{equation}

Therefore, substituting \eqref{td} and \eqref{tc} into \eqref{ta}, we obtain
\[
  \| \varphi_{t,\lambda} \|^2 \le CmT \lambda^{-1} \lambda^{1/2} \le C \lambda^{-1/2}.
\]
This proves \eqref{t2} and completes the proof.
\end{proof}

\section{Energy bounds} \label{s:ebounds}

In this section we prove some estimates which form a basic ingredient in the proof of Theorem \ref{t:main}.

Let $L_{\sharp \lambda}$ be either $L_\lambda$ or $L_{0,\lambda} = H_\Sigma + \lambda^2 H_{O,\lambda}$. We define
\[
  R_{\sharp \lambda} = (1 + \lambda^{-2} L_{\sharp \lambda})^{-1}.
\]
Here both operators act on $L^2(N \Sigma, \dvol_{N \Sigma})$. Observe that, without loss of generality, we may assume that $V > C_1$ for a constant $C_1>0$ such that $K > -C_1$ so that $L_{\sharp \lambda}$ is always positive and thus $R_{\sharp \lambda}$ is well-defined. (Indeed, this condition can be fulfilled by introducing the unitary transformation $e^{iC_2t}$, for a constant $C_2>0$, which leads to the substitution $V_\lambda \leadsto V_\lambda + C_2$.)

We will use a cutoff in the $n$ variable. Let
\begin{equation} \label{eta1}
  \eta_1^{} = \eta_1^{}(n)
\end{equation}
be a smooth function on $N \Sigma$ supported in $N \Sigma_{\lambda \delta}$ that equals $1$ on $N \Sigma_{\lambda \delta_1}$ with $0 < \delta_1 < \delta$ and $0 \le \eta_1 \le 1$.

The following lemma will be used many times in our analysis. It shows that, in each chart on $N \Sigma_{\lambda \delta}$, the operators $\lambda^{-1} D_x$, $D_y$, $(1+\lambda^2 W_\lambda)^{1/2}$, and some of its powers are bounded (in the operator sense) by powers of $1+\lambda^{-2} L_{\sharp \lambda}$.

\begin{lem}[Energy bounds] \label{l:e}
  Assume the hypotheses of Theorem \ref{t:main}. Let $(\mathcal{V}, (x,y))$ be a chart on $N \Sigma_{\lambda \delta}$, and let $\chi = \chi(\sigma)$ be a smooth function on $N\Sigma$ supported in $\mathcal{V}$ with $0 \le \chi \le 1$. Then there are constants $\lambda_0$ and $C$ such that, for every $\lambda > \lambda_0$, the following holds:
  \begin{enumerate}[\rm (i)]
    \item \label{l:e1}
      $\| \chi \eta_1^{} \lambda^{-1} D_x^{} R_{\sharp \lambda}^{1/2} \| + \| \chi \eta_1^{} D_y^{} R_{\sharp \lambda}^{1/2} \| + \| (1+\lambda^2 W_\lambda^{})^{1/2} R_{\sharp \lambda}^{1/2} \| \le C$.
    \item \label{l:e2}
      If $l$ and $p$ are integers and $\alpha$ is a multi-index such that $l \ge 0$, $p \ge 0$ and $l + |\alpha| + p \le 2$, then $\| \chi \eta_1^{} (1 + \lambda^2 W_\lambda^{})^{l/2} (\lambda^{-1} D_x^{})^\alpha D_y^p R_{\sharp \lambda} \| \le C$.
    \item \label{l:e3}
      If $l$ and $p$ are integers and $\alpha$ is a multi-index such that $l > 0$, $p \ge 0$ and $|\alpha| + p \le 2$, then $\| \chi \eta_1^{} (1 + \lambda^2 W_\lambda^{})^{l/2} (\lambda^{-1} D_x^{})^\alpha D_y^p R_{\sharp \lambda}^{l+1} \| \le C$.
  \end{enumerate}
  Here $(\lambda^{-1} D_x)^\alpha D_y^p = (\lambda^{-1} i \partial/ \partial x_1)^{\alpha_1} (\lambda^{-1} i \partial/\partial x_2)^{\alpha_2} (i\partial/\partial_y)^p$.
\end{lem}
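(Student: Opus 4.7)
The plan is to reduce every claim to a quadratic-form comparison against $1+\lambda^{-2}L_{\sharp\lambda}$ and then exploit the local coordinate expression \eqref{Llam2} together with the expansions of Lemma \ref{l:exp}. Shifting $V$ by a large constant as in the remark preceding \eqref{eta1}, I may assume $L_{\sharp\lambda}\ge I$, so $R_{\sharp\lambda}$ is positive with $\|R_{\sharp\lambda}\|\le 1$. The spectral identity $\|R_{\sharp\lambda}^{1/2}\psi\|^2+\lambda^{-2}\langle R_{\sharp\lambda}^{1/2}\psi,L_{\sharp\lambda}R_{\sharp\lambda}^{1/2}\psi\rangle=\|\psi\|^2$ then turns \eqref{l:e1} into the operator-form inequalities
\begin{equation*}
  (1+\lambda^2 W_\lambda) \le C(1+\lambda^{-2}L_{\sharp\lambda}), \qquad A^*\chi^2\eta_1^2 A \le C(1+\lambda^{-2}L_{\sharp\lambda})
\end{equation*}
for $A\in\{\lambda^{-1}D_x,\,D_y\}$.

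For the first inequality, the kinetic form in \eqref{Hlam} is non-negative and the added potentials $V_\lambda$ and $K$ are bounded, so $\lambda^{-2}\langle f,L_{\sharp\lambda}f\rangle \ge \lambda^2\langle f,W_\lambda f\rangle - C\|f\|^2$ directly. For the derivative bounds I work in the coordinate chart containing the support of $\chi$. On the support of $\eta_1 \subset N\Sigma_{\lambda\delta}$ one has $|y/\lambda|<\delta$, so by Lemma \ref{l:metric} the matrix $(G_\Sigma+C_\lambda)^{-1}$ is uniformly coercive; hence \eqref{Llam2} for $L_\lambda$ (respectively \eqref{Hs} for $L_{0,\lambda}$) yields, after absorbing the bounded $A'_{\lambda H}$ as well as the $\partial k_\lambda$ and $Q_\lambda$ corrections supplied by Lemmas \ref{l:metric} and \ref{l:exp} into $C\|f\|^2$,
\begin{equation*}
  \lambda^{-2}\langle f,L_{\sharp\lambda}f\rangle \ge c\int\chi^2\eta_1^2\bigl(\lambda^{-2}|D_x f|^2+|D_y f|^2\bigr)\,\dvol_{N\Sigma} - C\|f\|^2,
\end{equation*}
which is \eqref{l:e1}.

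Parts \eqref{l:e2} and \eqref{l:e3} follow by bootstrapping from \eqref{l:e1}, using the identity $\lambda^{-2}L_{\sharp\lambda}R_{\sharp\lambda}=I-R_{\sharp\lambda}$ (so $\|\lambda^{-2}L_{\sharp\lambda}R_{\sharp\lambda}\|\le 1$) and the commutator formula $[A,R_{\sharp\lambda}]=\lambda^{-2}R_{\sharp\lambda}[L_{\sharp\lambda},A]R_{\sharp\lambda}$. For a second-order product $B=A_1A_2$ with each $A_i$ drawn from $\{\chi\eta_1\lambda^{-1}D_x,\,\chi\eta_1 D_y,\,(1+\lambda^2W_\lambda)^{1/2}\}$, I would write
\begin{equation*}
  \|B R_{\sharp\lambda}\psi\|^2=\langle A_2 R_{\sharp\lambda}\psi,\,A_1^*A_1 A_2 R_{\sharp\lambda}\psi\rangle,
\end{equation*}
apply \eqref{l:e1} with a slightly enlarged cutoff $\tilde\chi\tilde\eta_1$ dominating that of $A_1$ to control $A_1^*A_1$ by $C(1+\lambda^{-2}L_{\sharp\lambda})$, and then commute the resulting $L_{\sharp\lambda}$ past $A_2$; each commutator $[L_{\sharp\lambda},A]$ is at worst second order with coefficients gaining a factor $\lambda^{-1}$ by \eqref{Ej}, so its contribution is reabsorbed into a small multiple of the positive form of $L_{\sharp\lambda}$. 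The remaining main term is a product of two copies of \eqref{l:e1}. Part \eqref{l:e3} uses the same scheme: each additional factor $(1+\lambda^2 W_\lambda)^{1/2}$ is paired, via the potential inequality from the first step, with one extra $R_{\sharp\lambda}^{1/2}$, accounting for the exponent $l+1$ of the resolvent. The principal technical obstacle will be the commutator bookkeeping for $[L_{\sharp\lambda},\chi\eta_1]$, $[L_{\sharp\lambda},D_y]$, and $[L_{\sharp\lambda},(1+\lambda^2 W_\lambda)^{1/2}]$ together with the many $E_j$ remainders from \eqref{Llame}; Lemma \ref{l:exp} and the scale separations $\chi=\chi(\sigma)$ and $\eta_1=\eta_1(n)$ varying on scale $\lambda$ (so $\partial_y\eta_1=O(\lambda^{-1})$) are precisely what make every such error genuinely lower order in $\lambda$.
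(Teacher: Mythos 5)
Your argument for part (i) is essentially the paper's: set $f=\chi\eta_1$, use $G_\lambda^{-1}\ge fG_\lambda^{-1}f$ to insert the cutoff into the quadratic form, exploit the block-diagonal form of $G_\lambda$ and the coercivity of $(G_\Sigma+C_\lambda)^{-1}$ on the support of $\eta_1$, and absorb the magnetic and $\partial k_\lambda$ corrections by Cauchy--Schwarz. (One small imprecision: the $Q_\lambda$ term $D_x^*E_1D_x$ is not ``absorbed into $C\|f\|^2$''; it is controlled by the leading kinetic form because $E_1$ is small on the support of $\eta_1$.)

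For parts (ii) and (iii) there is a genuine gap. You propose, for $B=A_1A_2$, to bound $A_1^*A_1\le C(1+\lambda^{-2}L_{\sharp\lambda})$ by part (i), then commute $1+\lambda^{-2}L_{\sharp\lambda}$ past $A_2$, treat the commutator as lower order, and identify the ``remaining main term'' as a product of two copies of (i). But after that commutation the main term is
\[
\langle A_2R_{\sharp\lambda}\psi,\,A_2(1+\lambda^{-2}L_{\sharp\lambda})R_{\sharp\lambda}\psi\rangle = \langle A_2R_{\sharp\lambda}\psi,\,A_2\psi\rangle,
\]
and part (i) gives no control on $\|A_2\psi\|$ because the second factor has no resolvent attached. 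At the operator level your reduction requires $A_2^*(1+\lambda^{-2}L_{\sharp\lambda})A_2 \le C(1+\lambda^{-2}L_{\sharp\lambda})$, and this fails for exactly the pieces that matter: already in the normal-oscillator model, with $Q=1+D_y^2+y^2$ and $A_2=D_y$, the operator $D_yy^2D_y$ is bounded by $CQ^2$ but not by $CQ$ (the quotient grows linearly in the energy). The paper's device is to keep the estimate symmetric from the outset: it computes $L_\lambda f^2L_\lambda$ explicitly (eq.\ \eqref{c1}), compares it with $C\lambda^4(1+\lambda^{-2}L_\lambda)f^2(1+\lambda^{-2}L_\lambda)$ (eq.\ \eqref{c3}), and sandwiches the resulting inequality by $R_\lambda$ on both sides, so that each resolvent annihilates one factor of $1+\lambda^{-2}L_\lambda$ and leaves $f^2\le 1$; the cross terms $Tf^2T$ and $|U_1|$ are then handled separately in \eqref{c4}--\eqref{c5} using part (i) and the $E_j$ calculus. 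This symmetric squaring is the missing idea in your sketch.

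Part (iii) is also underdeveloped. The paper's induction on $l$ hinges on the observation that, after multiplication by the cutoff, the functions $\lambda^{-1}(1+\lambda^2W_\lambda)^{1/2}$, $\{D_x\lambda^2W_\lambda\}/(\lambda(1+\lambda^2W_\lambda)^{1/2})$, $\{D_xD_x^T\lambda^2W_\lambda\}/(\lambda^2(1+\lambda^2W_\lambda)^{1/2})$, and similar ratios are bounded (a consequence of \eqref{l:exp2} and \eqref{Ej}); this is what lets $[\lambda^{-2}L_\lambda,g^{l-1}]$ factor as in \eqref{d2} into a first-order operator times $(1+\lambda^2W_\lambda)^{(l-2)/2}$ with bounded coefficients. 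Your statement that each additional $(1+\lambda^2W_\lambda)^{1/2}$ is ``paired with one extra $R_{\sharp\lambda}^{1/2}$'' correctly tracks the exponent, but the substance of (iii) is precisely this commutator structure, which the sketch does not identify.
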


\begin{rem} \label{rem}
  We will denote by $C$ a positive constant that may change from line to line. We will prove the lemma for $L_\lambda$; the proof for $L_{0,\lambda}$ is similar. In fact, we can see from \eqref{Llame} that $L_\lambda$ contains the term $L_{0,\lambda}$. For the sake of clarity, we will prove the Lemma for arbitrary $\gamma$ in $L_\lambda$; that is, we will prove the Lemma without assuming that $A_V=0$. This is just to make clear that the statement is true independently of the gauge transformation. To simplify the notation we write $A$ in place of $A'$.
\end{rem}

\begin{proof}
See Remark \ref{rem}. Let $f = \chi \eta_1$. Then $f \in C_0^\infty(N \Sigma)$ and $0 \le f \le 1$. By writing $G_\Sigma^{-1}$ as $G_\Sigma^{-1/2} G_\Sigma^{-1/2}$, it follows easily that $G_\lambda^{-1} \ge f G_\lambda^{-1} f$. Thus, by recalling the expression for $L_\lambda$ in \eqref{Llam1},
\[
  L_\lambda^{} \ge (D + i\partial k_\lambda^{} + A_\lambda^{})^* f G_\lambda^{-1} f (D + i\partial k_\lambda^{} + A_\lambda^{}) + V_\lambda^{} + \lambda^4 W_\lambda^{}.
\]
Using the explicit form of the metric we obtain
\[
  f G_\lambda^{-1} f = f \begin{bmatrix} (G_\Sigma + C_\lambda)^{-1} & 0 \\ 0 & \lambda^2 \end{bmatrix} f \ge C^{-1} f \begin{bmatrix} I & 0 \\ 0 & \lambda^2 \end{bmatrix} f.
\]
Using \eqref{l:exp3}, and observing that $\eta_1(y) = \phi(y/\lambda)$ for $\phi \in C_0^\infty(\R)$, by a short calculation we prove that
\[
  \partial k_\lambda^* f G_\lambda^{-1} f \partial k_\lambda + \{ D^* f G_\lambda^{-1} f i \partial k_\lambda \} \le K + E_1 + 2 \chi^2 \lambda^2 \eta_1 \{D_y \eta_1\} i \partial_y k_\lambda \le C.
\]
Hence, for sufficiently large $\lambda$,
\[
  \begin{split}
    \lambda^{-2} L_\lambda^{} & \ge \lambda^{-2} (D + A_\lambda^{})^* f G_\lambda^{-1} f (D + A_\lambda^{}) + \lambda^2 W_\lambda^{} - C \lambda^{-2} \\
    & \ge C^{-1} (D_x + A_{\lambda H})^* \lambda^{-2} f^2 (D_x + A_{\lambda H}) \\
    & \quad + C^{-1} (D_y + A_{\lambda V})^* f^2 (D_y + A_{\lambda V}) + \lambda^2 W_\lambda^{} - 1.
  \end{split}
\]
Thus, using the Cauchy-Schwarz inequality,
\[
  \begin{split}
    & \langle \psi, C(1 + \lambda^{-2} L_\lambda) \psi \rangle \\
    & \ge \| f \lambda^{-1} (D_x + A_{\lambda H}) \psi \|^2 + \| f(D_y + A_{\lambda V}) \psi \|^2 + \| \lambda W_\lambda^{1/2} \psi \|^2 \\
    & \ge 2^{-1} \| f \lambda^{-1} D_x \psi \|^2 + 2^{-1} \| f D_y \psi \|^2 + \| \lambda W_\lambda^{1/2} \psi \|^2 \\
    & \quad -7 (\lambda^{-2} \| f A_{\lambda H} \|_{L^\infty(N\Sigma)}^2 + \| f A_{\lambda V} \|_{L^\infty(N\Sigma)}^2 ) \| \psi \|^2.
  \end{split}
\]
Recall that $A_{\lambda V}$ contains a factor $\lambda^{-1}$. Then it follows from this inequality that, for sufficiently large $\lambda$,
\[
  D_x^* \lambda^{-2} f^2 D_x^{} + D_y^* f^2 D_y^{} + 1 + \lambda^2 W_\lambda^{} \le C (1 + \lambda^{-2} L_\lambda^{}).
\]
Since $R_\lambda^{1/2} (1 + \lambda^{-2} L_\lambda^{}) R_\lambda^{1/2} = 1$, we get
\[
  \| f \lambda^{-1} D_x^{} R_\lambda^{1/2} \| + \| f D_y^{} R_\lambda^{1/2} \| + \| (1+\lambda^2 W_\lambda^{})^{1/2} R_\lambda^{1/2} \| \le C. 
\]
This proves part \eqref{l:e1}.

To prove part \eqref{l:e2}, write \eqref{Llam1} as
\[
  L_\lambda = D^* G_\lambda^{-1} D + T + U
\]
with
\begin{gather*}
  T \coloneqq A_\lambda^* G_\lambda^{-1} D + D^* G_\lambda^{-1} A_\lambda^{}, \\
  U \coloneqq A_\lambda^* G_\lambda^{-1} A_\lambda^{} + V_\lambda^{} + \lambda^4 W_\lambda^{} + \partial k_\lambda^* G_\lambda^{-1} \partial k_\lambda^{} + \{D^* G_\lambda^{-1} i \partial k_\lambda^{}\}.
\end{gather*}
Then calculating we find
\begin{equation} \label{c1}
  \begin{split}
    L_\lambda f^2 L_\lambda & = D^* G_\lambda^{-1} D f^2 D^* G_\lambda^{-1} D + 2 D^* G_\lambda^{-1} f^2 U D + (Uf)^2 + U_1 \\
    & \quad + D^* G_\lambda^{-1} D f^2 T + T f^2 D^* G_\lambda^{-1} D + T f^2 U + U f^2 T + T f^2 T
  \end{split}
\end{equation}
where
\[
  \begin{split}
    U_1 & \coloneqq D^* G_\lambda^{-1} [D, f^2 U] + [Uf^2, D^*] G_\lambda^{-1} D \\
    & = \{ D^* G_\lambda^{-1} \{ Df^2U \} \} = \{ D_x^*(G_\Sigma+C_\lambda)^{-1} \{D_x f^2 U\} \} + \lambda^2 \{D_y^* D_y^{} f^2 U \}.
  \end{split}
\]
Calculating $\langle \psi, L_\lambda f^2 L_\lambda \psi \rangle$, and using the Cauchy-Schwarz inequality to get a lower bound for the second line in \eqref{c1}, we obtain
\begin{equation} \label{c2}
  \begin{split}
    L_\lambda f^2 L_\lambda & \ge 2^{-1} D^* G_\lambda^{-1} D f^2 D^* G_\lambda^{-1} D + 2 D^* G_\lambda^{-1} f^2 U D + 2^{-1}(Uf)^2 \\
    & \quad - C T f^2 T - |U_1|.
  \end{split}
\end{equation}
Similarly we prove that
\begin{equation} \label{c3}
  C(1 + \lambda^{-2} L_\lambda) f^2 (1 + \lambda^{-2} L_\lambda) \geq \lambda^{-4} L_\lambda f^2 L_\lambda.
\end{equation}
We next estimate $T f^2 T$ and $|U_1|$.

Using the Cauchy-Schwarz inequality and calculating we find
\begin{equation} \label{c4}
  \begin{split}
    T f^2 T & \le C (fD^* G_\lambda^{-1} A_\lambda^{})^* (f D^* G_\lambda^{-1}) + C(f A_\lambda^* G_\lambda^{-1} D)^* (f A_\lambda^* G_\lambda^{-1} D) \\
    & \le C (f A_\lambda^* G_\lambda^{-1} D)^* (f A_\lambda^* G_\lambda^{-1} D) + C \{ f D^* G_\lambda^{-1} A_\lambda^{} \}^2 \\
    & \le C D_x^* f^2 D_x^{} + C D_y^* \lambda^2 f^2 D_y^{} + C \\
    & \le \lambda^2 C(1 + \lambda^{-2} L_\lambda).
  \end{split}
\end{equation}
Here we used part \eqref{l:e1} and the estimate
\[
  \{ f D^* G_\lambda^{-1} A_\lambda^{} \} \le C + \lambda \{f D_y A_V(x,y/\lambda)\} \le C,
\]
which follows by a short calculation using the error bounds in \eqref{Ej}.

We now indicate how to prove that
\begin{equation} \label{c5}
  |U_1| \le C \tilde{\chi}^2 \tilde{\eta}_1^2 \lambda^4 ( 1 + \lambda^2 W_\lambda ),
\end{equation}
where $\tilde{\chi}$ and $\tilde{\eta}_1$ are like $\chi$ and $\eta_1$, but they are supported on a slightly larger subset. (There exist such $\tilde{\chi}$ and $\tilde{\eta}_1$ because of the properties of $\chi$ and $\eta_1$.) To obtain this estimate we write $U = E_0 + \lambda^2(2^{-1} wy^2 + \lambda^2 E_3)$, calculate $U_1$ using the product rule $D^TMv = \{D^TM\}v + (MD)^Tv$, where $M$ is a matrix and $v$ is a vector, and then use \eqref{Ej} and Lemma \ref{l:exp}. The first and second derivatives of $\chi$ and $\eta_1$ are bounded by a constant times $\tilde{\chi}$ and $\tilde{\eta}_1$, respectively. This gives the desired estimate.

Therefore, by combining \eqref{c5}, \eqref{c4} and \eqref{c3} into \eqref{c2}, by multiplying the inequality by $R_\lambda$ on both sides and rewriting it, and by using part \eqref{l:e1} to estimate the term which arises from \eqref{c5}, we obtain
\[
  \lambda^{-2} \| f D^* G_\lambda^{-1} D R_\lambda \| + \lambda^{-2} \| f U^{1/2} G_\lambda^{-1/2} D R_\lambda \| + \lambda^{-2} \| f U R_\lambda \| \le C.
\]
We next derive consequences from this estimate. From $\lambda^{-2} \| f U R_\lambda \| \le C$ it follows that $\| f(1 + \lambda^2 W_\lambda) R_\lambda \| \le C$. This proves \eqref{l:e2} for $l \le 2$ with $|\alpha| + p = 0$. From $\lambda^{-2} \| f U^{1/2} G_\lambda^{-1/2} D R_\lambda \| \le C$ we obtain $\| f \lambda W_\lambda^{1/2} \lambda^{-1} D_x R_\lambda \| \le C$ and $\| f \lambda W_\lambda^{1/2} D_y R_\lambda \| \le C$. This combined with part \eqref{l:e1} proves \eqref{l:e2} for $l=1$ with $|\alpha|+p=1$. The estimate in part \eqref{l:e2} for $l=0$ with $|\alpha|+p=1$ follows from part \eqref{l:e1}. The $l=0$ case with $|\alpha|+p=2$ is a consequence of $\lambda^{-2} \| f D^* G_\lambda^{-1} D R_\lambda \| \le C$. This is equivalent to $\lambda^{-2} \| D^* G_\lambda^{-1} D f R_\lambda \| \le C$ because the commutator terms are bounded, by part \eqref{l:e1}. Now observe that
\[
  \begin{split}
    \lambda^{-4} \| D^* G_\lambda^{-1} D f R_\lambda \psi \|^2 & \ge 2^{-1} \lambda^{-4} \| D^T G_\lambda^{-1} D f R_\lambda \psi \|^2 - C \\
    & \ge C \| [\lambda^{-1} D_x \ D_y]^T [\lambda^{-1} D_x \ D_y] f R_\lambda \psi \|^2 - C \\
    & \ge C \| (\lambda^{-1} D_x)^\alpha D_y^p f R_\lambda \psi \| - C.
  \end{split}
\]
Here, the first inequality follows by Cauchy-Schwarz inequality and part \eqref{l:e1}, the second inequality (uniform ellipticity) holds because $(G_\Sigma+C_\lambda)^{-1}$ and $1$ are symmetric and positive-definite (recall the block form of $G_\lambda^{-1}$), and the third inequality follows by Cauchy-Schwarz inequality. Thus, by moving $f$ back to the left and again using part \eqref{l:e1} to estimate the commutator terms, we obtain
\[
  \| f (\lambda^{-1} D_x)^\alpha D_y^p R_\lambda \| \le C.
\]
This completes the proof of part \eqref{l:e2}.

We finally prove part \eqref{l:e3}. Write $f = f_1^l f$, where $f_1 = \chi_1 \tilde{\eta}_1$, with $\chi_1$ such that $\chi_1 \chi = \chi$ and $\tilde{\eta}_1$ as in the proof of part \eqref{l:e2}. We begin by showing that
\begin{equation} \label{d1}
  \| f (1 + \lambda^2 W_\lambda)^{l/2} R_\lambda^{l+1} \| \le C.
\end{equation}
Set $g = f_1 (1 + \lambda^2 W_\lambda)^{1/2}$ and write \eqref{Llame} as
\[
  L_\lambda = D_x^* E_0^{} D_x + D_x^* E_0^{} + E_0^* D_x^{} + E_0 + \lambda^2 D_y^* D_y^{} + \lambda^4 W_\lambda.
\]
Then calculating we find
\[
  g^l R_\lambda^l = g R_\lambda g^{l-1} R_\lambda^{l-1} + g R_\lambda [\lambda^{-2} L_\lambda, g^{l-1}] R_\lambda^l
\]
and
\begin{equation} \label{d2}
  \begin{split}
    [\lambda^{-2} L_\lambda, g^{l-1}] & = \lambda^{-1} D_x^* E_0 \{D_x \lambda^{-1} g^{l-1} \} + \lambda^{-1} E_0^* \{ D_x \lambda^{-1} g^{l-1} \} \\
    & \quad + E_0 \{ D_x^{} D_x^T \lambda^{-2} g^{l-1} \} + D_y^* 2\{D_y g^{l-1} \} - \{D_y D_y g^{l-1} \} \\
    & = ( \lambda^{-1} D_x^* J_1 + D_y^* J_2 + J_3 ) (1+\lambda^2 W_\lambda)^{(l-2)/2} \\
    & = (1+\lambda^2 W_\lambda)^{(l-2)/2} ( \tilde{J}_1 \lambda^{-1} D_x + \tilde{J}_2 D_y + \tilde{J}_3 ),
  \end{split}
\end{equation}
where $J_k$ and $\tilde{J}_k$ (for $k=1,\dots,3$) are bounded functions with support contained in the support of $f_1$. To prove this, it is crucial to note that
\[
  \frac{1}{\lambda}(1+\lambda^2 W_\lambda)^{1/2}, \ \frac{\{D_x \lambda^2 W_\lambda \}}{\lambda (1+\lambda^2 W_\lambda)^{1/2}}, \ \frac{\{ D_x D_x^T \lambda^2 W_\lambda \}}{\lambda^2 (1+\lambda^2 W_\lambda)^{1/2}}, \ \text{etc.}
\]
multiplied by $f_1$ (or derivatives of it) are bounded functions. This follows by using \eqref{l:exp2} and the error bounds in \eqref{Ej} (note that $\tilde{\eta}_1(y) = \phi(y/\lambda)$ for $\phi \in C_0^\infty(\R)$). Thus, by Lemma \ref{l:e}\eqref{l:e1},
\[
  \| g^l R_\lambda^l \| \le C \|g^{l-1} R_\lambda^{l-1} \| + C \| f_2 (1 + \lambda^2 W_\lambda)^{(l-2)/2} R_\lambda^{l-1} \|,
\]
where ${f}_2$ is like $f_1$, but is supported on a slightly larger subset. Thus, by induction, we obtain \eqref{d1} (note that $\| g^l R_\lambda^{l+1} \| \le C \| g^l R_\lambda^l \|$).

We now set $T_{\alpha,p} = (\lambda^{-1} D_x)^\alpha D_y^p$ and write $T = T_{\alpha,p}$ with $|\alpha| + p \le 2$. Then
\[
  \begin{split}
    \| g^l T R_\lambda^{l+1} \| & = \| [g^l, T] R_\lambda^{l+1} + T f_1 g^l R_\lambda^{l+1} \| \\
    & = \| [g^l, T] R_\lambda^{l+1} + T f_1 R_\lambda g^l R_\lambda^l + T f_1 R_\lambda [\lambda^{-2} L_\lambda, g^l] R_\lambda^{l+1} \| \\
    & \le \| [g^l, T] R_\lambda^{l+1} \| + \| T f_1 R_\lambda \| ( \| g^l R_\lambda^l \| + \| [\lambda^{-2} L_\lambda, g^l] R_\lambda^{l+1} \|).
  \end{split}
\]
Furthermore
\[
  \| [g^l, T] R_\lambda^{l+1} \| \le \sum_{|\beta|+k \le 1} \| J_{l-1,\beta,k} T_{\beta,k} R_\lambda^l \|,
\]
and, using the last line in \eqref{d2}, we obtain
\[
  \| [\lambda^{-2} L_\lambda, g^l] R_\lambda^{l+1} \| \le \sum_{|\beta|+k \le 1} \| \tilde{J}_{l-1,\beta,k} T_{\beta,k} R_\lambda^l \|,
\]
with both $|J_{l-1,\beta,k}|$ and $|\tilde{J}_{l-1,\beta,k}|$ bounded above by $C f_3 (1 + \lambda^2 W_\lambda)^{(l-1)/2}$, where $f_3$ is like $f_2$, but is supported on a slightly larger subset. Therefore, by induction, we obtain the inequality of part \eqref{l:e3}.
\end{proof}

\section{Propagation bounds} \label{s:pbounds}

It follows from Lemma \ref{l:e} that the expected values of $(1+\lambda^2 W_\lambda)^{1/2}$ and $D_y$ are bounded by a constant, but it follows only that the expected value of $D_x$ is bounded by a constant times $\lambda$. We next improve this estimate by showing that, in the time interval $[0,T]$, the expected value of $D_x$ is actually bounded by a constant, and thus the energy transfered from normal motions is finite. This is the content of Lemma \ref{l:Dx} below.

We will use an energy cutoff. Let $\eta$ be a smooth function on $\R$ supported in $(0,\mu)$ with $\mu > 0$ that equals $1$ on $(0,\mu/2)$ with $0 \le \eta \le 1$. We define
\begin{equation} \label{etaE}
  \eta_{\sharp E} = \eta(\lambda^{-2} L_{\sharp \lambda}).
\end{equation}
We will also need a cutoff with a slightly larger support. We thus define $\tilde{\eta}_{\sharp E} = \tilde{\eta}(\lambda^{-2} L_{\sharp \lambda})$, where $\tilde{\eta}$ is a smooth function on $\R$ supported in $(0,\mu)$ with $\tilde{\eta} \eta = \eta$ and $0 \le \tilde{\eta} \le 1$.

Our goal in this section is to prove the following lemma:

\begin{lem}[Propagation bounds for $D_x$] \label{l:Dx}
  Assume the hypotheses and use the notation of Lemma \ref{l:e}. Let $\psi \in C_0^\infty(N\Sigma)$. Then for any $T < \infty$, there exist constants $\lambda_0$ and $C$ such that, for every $\lambda > \lambda_0$,
  \[
    \sup_{t \in [0,T]} \| \eta_1 D_x \chi e^{-itL_{\sharp \lambda}} \eta_{\sharp E} \psi \| \le C.
  \]
\end{lem}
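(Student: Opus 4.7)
I will prove the bound by controlling the $H_\Sigma$-energy $\langle\phi_t, H_\Sigma\phi_t\rangle$ of $\phi_t = e^{-itL_{\sharp\lambda}}\eta_{\sharp E}\psi$, noting that $\eta_{\sharp E}$ commutes with $L_{\sharp\lambda}$ by functional calculus. Since $\Sigma$ is compact and $H_\Sigma = (D_x + A_H(x,0))^* G_\Sigma^{-1}(D_x + A_H(x,0)) + V(x,0) + K(x)$ is a uniformly elliptic self-adjoint operator with bounded coefficients, a G{\aa}rding estimate on each chart together with the boundedness of $[D_x,\chi]$ yields
\[
  \|\eta_1 D_x\chi\phi_t\|^2 \le C\bigl(1 + \langle\phi_t, H_\Sigma\phi_t\rangle\bigr),
\]
so the lemma reduces to showing $\sup_{t \in [0,T]}\langle\phi_t, H_\Sigma\phi_t\rangle\le C$ uniformly in $\lambda$.

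For $L_{\sharp\lambda}=L_{0,\lambda}$, the argument is immediate: hypothesis (iii) of Theorem~\ref{t:main} forces $[H_\Sigma, H_{O,\lambda}]=0$, hence $[H_\Sigma, L_{0,\lambda}]=0$, and both $H_\Sigma$ and $\eta_{0E}=\eta(\lambda^{-2}L_{0,\lambda})$ commute with the flow. Consequently
\[
  \langle\phi_t, H_\Sigma\phi_t\rangle = \langle\psi, \eta_{0E}^2 H_\Sigma\psi\rangle \le \|H_\Sigma\psi\|\,\|\psi\|,
\]
which is finite since $\psi\in C_0^\infty(N\Sigma)$.

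For $L_{\sharp\lambda}=L_\lambda = L_{0,\lambda}+Q_\lambda$ from \eqref{Llame}, I compute
\[
  \frac{d}{dt}\langle\phi_t, H_\Sigma\phi_t\rangle = i\langle\phi_t, [Q_\lambda, H_\Sigma]\phi_t\rangle.
\]
The plan is to establish the form inequality
\[
  |\langle\phi_t, [Q_\lambda, H_\Sigma]\phi_t\rangle| \le C\bigl(1 + \langle\phi_t, H_\Sigma\phi_t\rangle\bigr),
\]
so that Gronwall yields $\langle\phi_t, H_\Sigma\phi_t\rangle\le e^{CT}(\langle\phi_0, H_\Sigma\phi_0\rangle + C)$. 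The initial value is bounded uniformly in $\lambda$ by a Helffer--Sj{\"o}strand commutator computation: writing $H_\Sigma\eta_E\psi = \eta_E H_\Sigma\psi + [H_\Sigma,\eta_E]\psi$, the commutator is an integral of $[\lambda^{-2}L_\lambda, H_\Sigma] = \lambda^{-2}[Q_\lambda, H_\Sigma]$ sandwiched by resolvents; since $\psi\in C_0^\infty$, this contributes $O(\lambda^{-2})$ and $\|H_\Sigma\eta_E\psi\|\le C$, so $\langle\phi_0, H_\Sigma\phi_0\rangle\le C$ by Cauchy--Schwarz.

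The main obstacle is the form inequality for $[Q_\lambda, H_\Sigma]$. This commutator is formally third order in $D_x$---its Poisson-bracket symbol is cubic in $p$---and on the energy cutoff range $\|D_x\phi_t\|$ can be as large as $O(\lambda)$, so crude norm bounds yield $O(\lambda^3)$, far too large. The resolution rests on two structural inputs: first, $Q_\lambda = D_x^* E_1 D_x + D_x^* E_1 + E_1^* D_x + E_1$ with $E_1=O(y/\lambda)$; second, the pointwise bound $y \le C(1+\lambda^2 W_\lambda)^{1/2}$ from \eqref{l:exp2}. Expanding the commutator and symmetrizing the cubic $D_x$-terms by integration by parts rewrites the leading contributions in the form $B_1^* H_\Sigma B_2 + R$, with $B_1, B_2$ controlled on the energy cutoff range by Lemma~\ref{l:e}(i)--(iii) and $R$ a bounded multiplication operator; the $y/\lambda$ prefactors absorb against the weighted derivative bounds of Lemma~\ref{l:e}(iii) to produce $\lambda$-uniform constants, yielding the required form bound and closing Gronwall on $[0,T]$.
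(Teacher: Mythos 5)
Your overall strategy matches the paper's: propagate a kinetic observable via a commutator-Gronwall estimate and use the energy cutoff to control it at $t=0$. The paper uses the observable $B_+ = \eta_1 B\eta_1 + 1$ with $B=\sum_j\chi_j D_x^* G_\Sigma^{-1} D_x\chi_j$ built from the partition of unity and the $n$-cutoff $\eta_1$; you use the global operator $H_\Sigma$, which does have the advantage that $[H_\Sigma, L_{0,\lambda}]=0$ (up to the small error in hypothesis (iii)), so the $\lambda^2 D_y^* D_y$ term drops out. Your $L_{0,\lambda}$ case and the reduction to the form inequality are fine.

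However, the crucial step---the form inequality for $[Q_\lambda, H_\Sigma]$---is not established, and the sketch you give would not work as written. The commutator is genuinely third order in $D_x$, and Lemma \ref{l:e}(i)--(iii), which you invoke, only controls (weighted) second-order quantities against powers of the resolvent $R_{\sharp\lambda}$. On the energy cutoff range one has $\|D_x\phi_t\|\lesssim\lambda$, so any estimate that uses two factors of $\lambda^{-1}D_x$ to absorb into the resolvent and leaves one bare $D_x$ gives at best a bound of the form $C\lambda(1+b)^{1/2}$ rather than $C(1+b)$; Gronwall would then produce a $\lambda$-dependent exponential and the lemma would fail. What the paper actually needs---and proves as the separate Lemma \ref{l:e2+}---is a \emph{mixed} estimate: $\|(1+\lambda^2 W_\lambda)^{l/2}(\lambda^{-1}D_x)^\alpha D_y^p\, D_x^\beta\,\chi\eta_1\eta_E B_+^{-1/2}\|\le C$ with $l+|\alpha|+p\le2$ and $|\beta|=1$. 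The point is that the extra bare factor $D_x^\beta$ is absorbed by the observable $B_+^{-1/2}$, not by the resolvent; this is precisely the gain needed to close the form inequality. Your proposal has no analogue of this estimate (and the decomposition ``$B_1^* H_\Sigma B_2 + R$ with $B_1,B_2$ bounded'' does not by itself give a form bound $\le C H_\Sigma$, since $B_1,B_2$ need not commute with $H_\Sigma^{1/2}$). You would also need the chart-by-chart cancellation trick (re-expansion against $\sum_k\chi_k^2=1$ and the covector change-of-variables for $D_{\tilde x}$ versus $D_x$), because the raw commutator of the $D_x^* E_1 D_x$ piece of $Q_\lambda$ with the principal part of $H_\Sigma$ does not have the explicit localized structure your sketch assumes.

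There is also a secondary gap in the initial bound $b(0)\le C$: the claim that $[H_\Sigma,\eta_E]\psi$ is $O(\lambda^{-2})$ via Helffer--Sj\"ostrand is not justified, because the commutator $\lambda^{-2}[Q_\lambda,H_\Sigma]$ sandwiched between resolvents is again a third-order expression in $D_x$, and the resolvents only provide two orders of control; one cannot simply ``put the derivatives on $\psi$'' since the resolvent does not preserve $C_0^\infty$. The paper avoids this entirely: it bounds $b(0)$ directly from Lemma \ref{l:e2+} (with $l=|\alpha|=p=0$, $|\beta|=1$) together with $\|B_+^{1/2}\psi\|<\infty$, which requires no Helffer--Sj\"ostrand step at this stage. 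In short, the strategy is right but the proof is not closed: you are missing the analogue of Lemma \ref{l:e2+} and the localized commutator structure that makes the form inequality scale correctly in $\lambda$.
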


To prove this lemma we will use Lemma \ref{l:e} and some additional estimates, which we state below in Lemmas \ref{l:e2+} and \ref{l:dyeta}. First, we need some definitions.

Using the partition of unity $\{\chi_j\}$ given in \eqref{pu}, we define
\[
  B = \sum_{j=1}^m \chi_j D_x^* G_\Sigma^{-1} D_x \chi_j.
\]
Here, in each term in the summation, $x$ is the local coordinate for the chart in which $\chi_j$ is supported. We then use the cutoff $\eta_1^{}$ given in \eqref{eta1} and set
\[
  B_+ = \eta_1 B \eta_1 + 1.
\]
In this definition, we included the identity to re-gain positivity so that $B_+^{-1/2}$ is well-defined (note that $\eta_1 B \eta_1$ is only nonnegative). Both operators $B$ and $B_+$ are essentially self-adjoint on $C_0^\infty(N\Sigma)$. (We may prove this by following the argument in the proof of \cite[Theorem 5.2]{S}.)

The additional estimates to proving Lemma \ref{l:Dx} are the following.

\begin{lem}[In addition to Lemma \ref{l:e}\eqref{l:e2}] \label{l:e2+}
  Assume the hypotheses and use the notation of Lemma \ref{l:e}. Then there exist constants $\lambda_0$ and $C$ such that, for every $\lambda > \lambda_0$, $l + |\alpha| + p \le 2$, and $|\beta| = 1$, we have
  \[
    \| (1 + \lambda^2 W_\lambda^{})^{l/2} (\lambda^{-1} D_x^{})^\alpha D_y^p D_x^\beta \chi \eta_1^{} \eta^{}_{\sharp E} B_+^{-1/2} \| \le C.
  \]
\end{lem}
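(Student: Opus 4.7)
The plan is to reduce the estimate to Lemma~\ref{l:e}\eqref{l:e2} via two additional tools plus commutator bookkeeping. The first tool extends Lemma~\ref{l:e}\eqref{l:e2} from $R_{\sharp\lambda}$ to the energy cutoff $\eta_{\sharp E}$: since $R_{\sharp\lambda}^{-1}\eta_{\sharp E} = (1+\lambda^{-2}L_{\sharp\lambda})\eta(\lambda^{-2}L_{\sharp\lambda})$ is a bounded function of $L_{\sharp\lambda}$ of operator norm at most $1+\mu$, functional calculus gives
\[
  \|X\chi\eta_1^{}\eta_{\sharp E}^{}\| \le \|X\chi\eta_1^{} R_{\sharp\lambda}^{}\|\,\|R_{\sharp\lambda}^{-1}\eta_{\sharp E}^{}\| \le C
\]
for $X = (1+\lambda^2 W_\lambda)^{l/2}(\lambda^{-1}D_x)^\alpha D_y^p$ with $l+|\alpha|+p\le 2$. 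The second tool is that $B_+^{-1/2}$ absorbs one unscaled tangential derivative on the cutoff support: from $\langle \phi, \eta_1 B \eta_1 \phi\rangle = \sum_j \langle G_\Sigma^{-1/2} D_x \chi_j \eta_1 \phi, G_\Sigma^{-1/2} D_x \chi_j \eta_1 \phi\rangle$ and the uniform positive definiteness of $G_\Sigma^{-1}$ on each chart we deduce $\sum_j \|D_x \chi_j \eta_1 \phi\|^2 \le C\langle \phi, B_+ \phi\rangle$; setting $\phi = B_+^{-1/2}\psi$ and using $\langle B_+^{-1/2}\psi, B_+ B_+^{-1/2}\psi\rangle = \|\psi\|^2$ gives $\sum_j \|D_x \chi_j \eta_1 B_+^{-1/2}\psi\|^2 \le C\|\psi\|^2$. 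Writing $\chi = \sum_{j\in J}\chi\chi_j^2$ with $J$ finite (using $\sum_j \chi_j^2 = 1$) and commuting $\chi\chi_j$ past $D_x$ modulo bounded multiplication terms yields $\|D_x^\beta \chi\eta_1^{} B_+^{-1/2}\|\le C$ for $|\beta|=1$.

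With these tools the main argument is a commutator decomposition. First commute $D_x^\beta$ past $\chi\eta_1$:
\[
  XD_x^\beta\chi\eta_1^{}\eta_{\sharp E}^{}B_+^{-1/2} = X\chi\eta_1^{} D_x^\beta \eta_{\sharp E}^{}B_+^{-1/2} + X\{D_x^\beta(\chi\eta_1^{})\}\eta_{\sharp E}^{}B_+^{-1/2}.
\]
The second piece contains a smooth bounded multiplication operator supported in $\text{supp}(\chi\eta_1)$; inserting slightly enlarged cutoffs $\tilde\chi\tilde\eta_1$ that equal $1$ there, commuting the multiplication with $X$ (picking up a strictly lower order operator), and applying the first tool gives a bound by $C$. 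For the first piece, further commute $D_x^\beta$ past $\eta_{\sharp E}$ and then $\chi\eta_1$ past $\eta_{\sharp E}$ to arrive at the leading contribution $X\eta_{\sharp E}^{}\cdot \chi\eta_1^{} D_x^\beta B_+^{-1/2}$: the factor $\chi\eta_1^{} D_x^\beta B_+^{-1/2}$ is bounded by the second tool (after one further commutation of $\chi\eta_1$ and $D_x^\beta$), the action of $\eta_{\sharp E}$ preserves the support within a slightly enlarged chart, and the outer $X\eta_{\sharp E}$ applied against an input of chart support is controlled by the first tool with enlarged cutoffs.

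The main obstacle is the control of the commutators $[D_x^\beta,\eta_{\sharp E}]$ and $[\chi\eta_1,\eta_{\sharp E}]$, since $\eta_{\sharp E}$ is defined through the spectral theorem applied to $L_{\sharp\lambda}$ and commutes neither with differential operators nor with multiplication by general functions. These are treated by the Helffer--Sj\"ostrand formula: for an almost analytic extension $\tilde\eta$ of $\eta \in C_0^\infty(\R)$ satisfying $|\bar\partial\tilde\eta(z)| \le C_N|\imag z|^N$ for every $N$,
\[
  \eta(\lambda^{-2}L_{\sharp\lambda}) = -\frac{1}{\pi}\int_\C \bar\partial\tilde\eta(z)\,(z-\lambda^{-2}L_{\sharp\lambda})^{-1}\, dz\wedge d\bar z,
\]
which gives $[A,\eta_{\sharp E}] = \pi^{-1}\int_\C \bar\partial\tilde\eta(z)\,(z-\lambda^{-2}L_{\sharp\lambda})^{-1}\lambda^{-2}[L_{\sharp\lambda}, A](z-\lambda^{-2}L_{\sharp\lambda})^{-1}\, dz\wedge d\bar z$. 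For $A = D_x^\beta$ or $A = \chi\eta_1$, the commutator $\lambda^{-2}[L_{\sharp\lambda}, A]$ is of strictly lower order (computable from the explicit form in \eqref{Llame} and controllable via Lemma~\ref{l:e}), and the decay of $\bar\partial\tilde\eta$ balanced against the resolvent growth $\|(z-\lambda^{-2}L_{\sharp\lambda})^{-1}\|\le |\imag z|^{-1}$ makes the integral converge uniformly in $\lambda$ and produces the required bound by $C$.
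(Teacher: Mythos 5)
Your overall strategy matches the paper's: you rely on Lemma~\ref{l:e}\eqref{l:e2} after inserting cutoffs and the boundedness of $R_{\sharp\lambda}^{-1}\eta_{\sharp E}$, you use the quadratic-form argument to show $B_+^{-1/2}$ absorbs one unscaled tangential derivative, and you control the troublesome nonlocal commutator with $\eta_{\sharp E}$ via the Helffer--Sj\"ostrand formula. All three ingredients are the correct ones and are exactly what the paper uses.

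There is, however, a genuine gap in your commutator decomposition. You split the task into controlling $[D_x^\beta,\eta_{\sharp E}]$ and $[\chi\eta_1,\eta_{\sharp E}]$ separately, asserting that $\lambda^{-2}[L_{\sharp\lambda},D_x^\beta]$ is ``of strictly lower order\dots and controllable via Lemma~\ref{l:e}.'' But $\lambda^{-2}[L_\lambda,D_x^\beta]$ contains the multiplication operator $-\lambda^{2}\{D_x^\beta W_\lambda\}$, and Lemma~\ref{l:e} gives no control over this term \emph{without} a cutoff in $n$: the hypotheses on $W$ only constrain $\partial_\sigma W$ near $\Sigma$ (hypothesis~(iii)), and outside $N\Sigma_{\lambda\delta}$ the quantity $\lambda^2\{D_x^\beta W_\lambda\}$ is completely unconstrained, so it cannot be bounded relative to $1+\lambda^{-2}L_\lambda$. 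Prepending $\chi\eta_1$ to the commutator, as you propose with $X\chi\eta_1[D_x^\beta,\eta_{\sharp E}]B_+^{-1/2}$, does not help: the resolvents $(z-\lambda^{-2}L_\lambda)^{-1}$ in the Helffer--Sj\"ostrand formula sit between the outer cutoff and the inner commutator $\lambda^{-2}[L_\lambda,D_x^\beta]$, and they do not preserve support, so the cutoff never reaches the dangerous term.

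The paper avoids this by commuting the whole cut-off operator $D_x^\beta f$ with $f=\chi\eta_1$ past $\eta_E$ in a single step, so that the Helffer--Sj\"ostrand integrand involves $\lambda^{-2}[L_\lambda,D_x^\beta f]$. Because $f$ is \emph{inside} the commutator, the potential term comes out as $\lambda^2\eta_1[D_x^\beta\chi,W_\lambda]=\chi\eta_1\{D_x^\beta\lambda^2 W_\lambda\}$, which on the support of $\eta_1$ (where $|y/\lambda|<\delta$) can be estimated by $C(1+\lambda^2 W_\lambda)$ via hypothesis~(iii) and then absorbed by Lemma~\ref{l:e}\eqref{l:e2} with $l=2$. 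To repair your proof you should therefore not peel $D_x^\beta$ off from $\chi\eta_1$ before commuting with $\eta_{\sharp E}$: keep them together as $D_x^\beta\chi\eta_1$, and compute the single commutator $[D_x^\beta\chi\eta_1,\eta_{\sharp E}]$ via Helffer--Sj\"ostrand so that the cutoff $\eta_1$ lands inside $\lambda^{-2}[L_\lambda,\,\cdot\,]$ where it is needed.
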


\begin{lem} \label{l:dyeta}
  Assume the hypotheses and use the notation of Lemma \ref{l:e}. Let $\eta = \eta(n)$ be a smooth function on $N \Sigma$ supported in $N \Sigma_{\lambda^s \delta}$ that equals $1$ on $N \Sigma_{\lambda^s \delta_0}$ with $0 < s \le 1$, $0 < \delta_0 < \delta$, and $0 \le \eta \le 1$. Then there exist constants $\lambda_0$ and $C$ such that, for every $\lambda > \lambda_0$, $l > 0$, and $|\alpha| \le 1$, we have
  \[
    \| \chi \{D_y \eta\} (\lambda^{-1} D_x)^\alpha D_y R_{\sharp \lambda}^{l+1} \| \le C \lambda^{-sl}.
  \]
\end{lem}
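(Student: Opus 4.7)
The plan is to exploit hypothesis (i) of Theorem \ref{t:main} to convert the localization of $\{D_y\eta\}$ away from $\Sigma$ into a large pointwise lower bound on $1+\lambda^2 W_\lambda$, and then absorb this gain via Lemma \ref{l:e}\eqref{l:e3}.

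First I would record two elementary facts. Since $\eta$ is obtained by rescaling a fixed smooth cutoff in the normal variable, in local coordinates $\eta(x,y)=\phi(y/\lambda^s)$ for a fixed $\phi$ supported in $|t|\le\delta$ and identically $1$ on $|t|\le\delta_0$. Hence $\{D_y\eta\}$ is a multiplication operator with $\|\{D_y\eta\}\|_\infty\le C\lambda^{-s}$ and support contained in the annular region $\lambda^s\delta_0\le|y|\le\lambda^s\delta$. Second, hypothesis (i) of Theorem \ref{t:main} gives $W(\sigma,n)\ge\kappa|n|^2$, so in local coordinates $\lambda^2 W_\lambda(x,y)=\lambda^2 W(\sigma(x),(y/\lambda)\nu(\sigma(x)))\ge\kappa y^2$. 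On the support of $\{D_y\eta\}$ this yields $1+\lambda^2 W_\lambda\ge C\lambda^{2s}$, and equivalently $(1+\lambda^2 W_\lambda)^{-l/2}\le C\lambda^{-sl}$ pointwise on that set.

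Next, define the multiplication operator $g:=\chi\{D_y\eta\}(1+\lambda^2 W_\lambda)^{-l/2}$. The two bounds combine to give $\|g\|_\infty\le C\lambda^{-s(l+1)}$, with $\mathrm{supp}(g)\subseteq\mathrm{supp}(\chi)\cap N\Sigma_{\lambda^s\delta}$. Since $s\le 1$, for $\lambda$ sufficiently large this support lies inside $N\Sigma_{\lambda\delta_1}$ (where $\eta_1\equiv 1$), so, choosing $\chi_0$ from the partition of unity \eqref{pu} with $\chi_0\equiv 1$ on $\mathrm{supp}(\chi)$, we have $g=g\chi_0\eta_1$ as multiplication operators. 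Because $(1+\lambda^2 W_\lambda)^{l/2}$ is a multiplication operator commuting with $g$, we can write
\[
  \chi\{D_y\eta\}(\lambda^{-1}D_x)^\alpha D_y R_{\sharp\lambda}^{l+1} = g\cdot\chi_0\eta_1(1+\lambda^2 W_\lambda)^{l/2}(\lambda^{-1}D_x)^\alpha D_y R_{\sharp\lambda}^{l+1}.
\]
By Lemma \ref{l:e}\eqref{l:e3} with $p=1$ (admissible since $|\alpha|+1\le 2$, as assumed) the right-hand factor is bounded uniformly in $\lambda$, and taking norms yields
\[
  \bigl\|\chi\{D_y\eta\}(\lambda^{-1}D_x)^\alpha D_y R_{\sharp\lambda}^{l+1}\bigr\|\le C\lambda^{-s(l+1)}\le C\lambda^{-sl}.
\]

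The main technical nuisance is aligning the cutoffs so that $\mathrm{supp}(g)\subseteq N\Sigma_{\lambda\delta_1}$ and Lemma \ref{l:e}\eqref{l:e3} applies verbatim; for $s<1$ this is automatic for $\lambda$ large, while the borderline case $s=1$ can be handled by either shrinking $\delta$ relative to $\delta_1$ or replacing $\eta_1$ by an adapted cutoff built in the same way (the proof of Lemma \ref{l:e} uses $\eta_1$ only through its support in $N\Sigma_{\lambda\delta}$ and the scale of its derivatives). The substantive content of the argument is simply that hypothesis (i) forces $(1+\lambda^2 W_\lambda)^{1/2}$ to be at least of order $\lambda^s$ on the narrow shell where $\{D_y\eta\}$ is nonzero, converting each additional power of $R_{\sharp\lambda}$ into a factor of $\lambda^{-s}$.
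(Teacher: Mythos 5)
Your proof is correct and takes essentially the same approach as the paper. Both arguments rest on the observation that $\{D_y\eta\}$ is supported in the shell $\lambda^s\delta_0\le|n|\le\lambda^s\delta$, where $(1+\lambda^2 W_\lambda)^{1/2}\gtrsim\lambda^s$ (the paper goes through $\langle n\rangle\le C(1+\lambda^2 W_\lambda)^{1/2}$; you invoke hypothesis (i) directly — both are equivalent), and then hand the remaining factor $\chi\eta_1(1+\lambda^2W_\lambda)^{l/2}(\lambda^{-1}D_x)^\alpha D_y R_{\sharp\lambda}^{l+1}$ to Lemma~\ref{l:e}\eqref{l:e3}. Your extra $\lambda^{-s}$ from $\|\{D_y\eta\}\|_\infty$ gives a marginally sharper exponent $\lambda^{-s(l+1)}$, which is not needed; and your remark about aligning the supports at $s=1$ flags the same issue the paper brushes past with the threshold $\lambda>(\delta/\delta_1)^{1/(1-s)}$, so your treatment of that borderline case is, if anything, slightly more careful.
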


Before we prove lemmas \ref{l:e2+} and \ref{l:dyeta}, let us use them to prove Lemma \ref{l:Dx}.

\begin{proof}[Proof of Lemma \ref{l:Dx}]
See Remark \ref{rem}. Set
\[
  b(t) = \langle e^{-itL_\lambda} \eta_E \psi, B_+ e^{-itL_\lambda} \eta_E \psi \rangle.
\]
Note that, since $\eta_1 B \eta_1 = B_+ - 1$, the lemma is proved if we show that
\[
  \sup_{t \in [0,T]} b(t) \le C.
\]
This estimate follows by Gronwall's inequality if we show that
\begin{equation} \label{gw}
  \frac{db}{dt} \le C b \qquad \text{and} \qquad b(0) \le C.
\end{equation}
Therefore, our goal is to prove \eqref{gw}.

We begin by showing that $b(0) \le C$. In fact
\[
  b(0) = \langle \eta_E \psi, B_+ \eta_E \psi \rangle = \| \eta_E \psi \|^2 + \sum_{j=1}^m \| G_\Sigma^{-1/2} D_x \chi_j \eta_1 \eta_E \psi \| \le C + mC \le C
\]
since
\[
  \| G_\Sigma^{-1/2} D_x \chi_j \eta_1 \eta_E \psi \| \le C \sum_{|\beta|=1} \| D_x^\beta \chi_j \eta_1 \eta_E B^{-1/2} \| \| B_+^{1/2} \psi \| \le C
\]
by Lemma \ref{l:e2+} (with $\alpha=0$, $p=0$, and $l=0$), and because $\| B_+^{1/2} \psi \| \le C$ for $\psi \in C_0^\infty(N\Sigma)$.

We now turn to the proof of $db/dt \le C b$. Observe that
\[
  \frac{d}{dt}b(t) = \langle \eta_E e^{-itL_\lambda} \psi, \tilde{\eta}_E [iL_\lambda, B_+] \tilde{\eta}_E \eta_E e^{-itL_\lambda} \psi \rangle \le C b(t)
\]
provided $\tilde{\eta}_E [iL_\lambda, B_+] \tilde{\eta}_E \le C B_+$, that is, provided
\begin{equation} \label{goal2}
  \sum_{j=1}^m \tilde{\eta}_E [iL_\lambda, \eta_1 \chi_j D_x^* G_\Sigma^{-1} D_x \chi_j \eta_1] \tilde{\eta}_E \le C B_+,
\end{equation}
where $\tilde{\eta}_E$ is the cutoff operator defined next to \eqref{etaE}. Thus, we are left to prove \eqref{goal2}. For this purpose, it suffices to write \eqref{Llame} as
\[
  \begin{split}
    L_\lambda & = D_x^* (G_\Sigma^{-1} + E_1^{}) D_x^{} + D_x^* (G_\Sigma^{-1} E_0^{}(x) + E_1^{}) + (E_1^* + E_0^*(x) G_\Sigma^{-1}) D_x^{} \\
    & \quad + E_0(x) + E_1 + \lambda^2(D_y^* D_y^{} + \lambda^2 W_\lambda) \\
    & \eqqcolon L_{1, \lambda} + \lambda^2 D_y^* D_y^{}.
  \end{split}
\]
We next estimate the contribution arising from $L_{1,\lambda}$ and $\lambda^2 D_y^* D_y^{}$ to \eqref{goal2}.

Let us prove that the contribution arising from $\lambda^2 D_y^* D_y^{}$ to the left-hand side of \eqref{goal2} is bounded by a constant. A simple calculation shows that
\begin{multline*}
  [ \lambda^2 D_y^* D_y^{}, \eta_1 \chi_j D_x^* G_\Sigma^{-1} D_x \chi_j \eta_1 ] \\
  = \lambda^2 ( \chi_j ( D_y^* \{D_y \eta_1\} + \, \text{adjoint} ) D^* G_\Sigma D_x \chi_j \eta_1 + \, \text{adjoint}).
\end{multline*}
Thus, the following estimate suffices to prove the claim:
\[
  \begin{split}
    & \lambda^2 |\langle \psi, \tilde{\eta}_E \chi_j D_y^* \{D_y \eta_1\} D_x^* G_\Sigma^{-1} D_x \chi_j \eta_1 \tilde{\eta}_E \psi \rangle| \\
    & \le \lambda^2 \| G_\Sigma^{-1/2} D_x \{D_y \eta_1\} D_y \chi_j \tilde{\eta}_E \psi \| \|G_\Sigma^{-1/2} D_x \chi_j \eta_1 \tilde{\eta}_E \psi \| \\
    & \le C \big( \lambda^4 \| \chi_j \{D_y \eta_1\} \lambda^{-1} D_x D_y R_\lambda^5 \| + \lambda^3 \| \{ D_x \chi_j \} \{ D_y \eta_1 \} D_y R_\lambda^5 \| \big) \\
    & \quad \times \| R_\lambda^{-5} \tilde{\eta}_E \| \| \eta_1 \lambda^{-1} D_x \chi_j R_\lambda^{1/2} \| \| R_\lambda^{-1/2} \tilde{\eta}_E \| \le C.
  \end{split}
\]
This follows by Lemma \ref{l:e}\eqref{l:e1} and Lemma \ref{l:dyeta} (with $s=1$ and $l=4$).

We now estimate the contribution arising from $L_{1,\lambda}$ to the left-hand side of \eqref{goal2}. We perform the calculations in a way that is suitable to make use of cancellations. Define $h_j = D_x^* G_\Sigma^{-1} D_x^{}$, $m_j = - \{D_x^T \chi_j\} G_\Sigma^{-1} \{D_x^{} \chi_j\}$, and $\mathcal{M} = \sum_{j=1}^m m_j$. Then
\[
  \chi_j D_x^* G_\Sigma^{-1} D_x^{} \chi_j = \tfrac{1}{2} \chi_j^2 h_j^{} + \tfrac{1}{2} h_j^{} \chi_j^2 + m_j.
\]
All these expressions are written in the chart $(\mathcal{V}_j, (x,y))$. Hence
\begin{align}
  & \sum_{j=1}^m [ iL_{1,\lambda}, \chi_j D_x^* G_\Sigma^{-1} D_x^{} \chi_j ] \notag \\
  & = \sum_{k=1}^m \sum_{j=1}^m \tfrac{1}{2} [i L_{1,\lambda}, \chi_j^2](h_j - h_k) \chi_k^2 + \tfrac{1}{2} \chi_k^2 (h_j - h_k) [i L_{\lambda,1}, \chi_j^2] \label{l1} \\
  & \quad + [iL_{1,\lambda}, \mathcal{M}] + \sum_{j=1}^m \tfrac{1}{2} [i L_{1,\lambda}, h_j] \chi_j^2 + \tfrac{1}{2} \chi_j^2 [i L_{1,\lambda}, h_j], \label{l2}
\end{align}
where we used that $\sum_{k=1}^m \chi_k^2=1$. Next, we want to express each term in \eqref{l1} in the chart $(\mathcal{V}_k, (x,y))$; we thus relabel $(\mathcal{V}_j, (x,y))$ as $(\mathcal{V}_j, (\tilde{x},\tilde{y}))$, and note that
\[
  D_{\tilde{x}} = M^T D_x + y N D_y \qquad \text{with} \qquad M G_\Sigma^{-1}(\tilde{x}) M^T = G_\Sigma^{-1}(x),
\]
where $M=M(x)$ is a matrix and $N=N(x)$ is a vector which arise from the Jacobian of $(\tilde{x},\tilde{y}) \mapsto (x,y)$ (we may think of $M$ and $N$ as functions of $x$). Thus
\[
  h_j - h_k = (y E_0^* D_y + E_0^*) D_x + y^2 E_0 D_y D_y + y E_0 D_y.
\]
Here, $E_0$ is either a vector or a scalar which depend only on $x$. Furthermore
\[
  [L_{1,\lambda}, \chi_j^2] = \chi_j D_x^* E_0 + \tilde{\chi}_j E_0,
\]
where $\tilde{\chi}_j = \tilde{\chi}_j(\sigma)$ is a smooth function on $N\Sigma$ supported in $\mathcal{V}_j$. Therefore
\begin{equation} \label{l3}
  \begin{split}
    \eqref{l1} = & \sum_{k=1}^m \big( \chi_k D_x^* (y E_0 D_y + (1+y) E_0) D_x \chi_k \\
    & \quad + \chi_k D_x^* (y^2 E_0 D_y D_y + (y+y^2) E_0 D_y + (1+y+y^2) E_0) \\
    & \quad + \tilde{\chi}_k (y^2 E_0 D_y D_y + (y+y^2) E_0 D_y + (1+y+y^2) E_0) \big).
  \end{split}
\end{equation}
Recall the presence of $\tilde{\eta}_E \eta_1$ around \eqref{l1} in \eqref{goal2}, and the estimate for $y$ in \eqref{l:exp2}. Then, by Lemma \ref{l:e2+} (with $\alpha=0$ and $(l,p)$ equal to $(0,0)$, $(1,0)$ and $(1,1)$), the first line in \eqref{l3} gives a contribution to \eqref{goal2} which is bounded by $C B_+$. By Lemma \ref{l:e2+} (with $\alpha=0$ and $(l,p)=(0,0)$), and Lemma \ref{l:e}\eqref{l:e2} (with $\alpha=0$ and $p \le 2$), the second and third line in \eqref{l3} give a contribution to \eqref{goal2} which is bounded by $C B_+^{1/2}$ and $C$, respectively. Thus, we are left to estimate the contribution arising from \eqref{l2} to \eqref{goal2}.

We next consider the quantity in \eqref{l2}. We can re-expand $\mathcal{M} = \mathcal{M}(x)$ as $\mathcal{M} = \sum_{k=1}^m \mathcal{M} \chi_k^2$, and then we find (using re-expansion again)
\[
  [L_{1,\lambda}, \mathcal{M}] = \sum_{k=1}^m \chi_k (D_x^* E_0^{} + E_0^{}).
\]
By Lemma \ref{l:e2+}, similarly as above, this gives a contribution to \eqref{goal2} which is bounded by $CB_+^{1/2}$. Finally, we estimate the summation in \eqref{l2}. This can be written (after some calculation) as
\[
  \begin{split}
    \sum_{j=1}^m \big( & \chi_j D_x^* (E_1^* D_x + E_1 + E_0) D_x \chi_j + \chi_j D_x^* E_0 + \tilde{\chi}_j (E_0^* D_x + E_0) \\
    & + \tilde{\chi}_j D_x^* E_0 \{ D_x \lambda^4 W_\lambda \} \chi_j + \chi_j \{D_x^T \lambda^4 W_\lambda \} \tilde{\chi}_j (E_0 D_x + E_0) \big),
  \end{split}
\]
where $\tilde{\chi}_j \chi_j = \chi_j$. Again, by Lemma \ref{l:e2+}, the first line in this expression gives a contribution to \eqref{goal2} which is bounded by $C B_+$. Recall from \eqref{l:exp2} that $\{ D_x \lambda^4 W_\lambda \} = O(y^4)$. Thus, by Lemmas \ref{l:e2+} and \ref{l:e}\eqref{l:e3} (with $l=4$ and $|\alpha|+p=0$), the second line in this expression gives a contribution to \eqref{goal2} which is bounded by $C B_+^{1/2}$. This completes the proof of \eqref{goal2} and proves the lemma.
\end{proof}

Finally, we prove Lemmas \ref{l:e2+} and \ref{l:dyeta}.

\begin{proof}[Proof of Lemma \ref{l:e2+}]
See Remark \ref{rem}. Let $\eta_{1,1} = \eta_{1,1}(n)$ be a smooth function on $N \Sigma$ supported in $N \Sigma_{\lambda \delta}$ with $\eta_{1,1} \eta_1 = \eta_1$ and $0 \le \eta_1 \le 1$. Let $\chi_1 = \chi_1(\sigma)$ be a smooth function on $N\Sigma$ supported in $\mathcal{V}$ with $\chi_1 \chi = \chi$ and $0 \le \chi_1 \le 1$. Set $f = \chi \eta_1$ and $F = (1 + \lambda^2 W_\lambda)^{l/2} (\lambda^{-1} D_x)^\alpha D_y^p \chi_1$. Then
\begin{equation} \label{b1}
  F D_x^\beta f \eta_E^{} B_+^{-1/2} = F \eta_{1,1}^{} \eta_E^{} D_x^\beta f B_+^{-1/2} + F \eta_{1,1}^{} [D_x^\beta f, \eta_E^{}] B_+^{-1/2}.
\end{equation}
Observe that, on the left-hand side of this equality, the term that arises from moving $\chi_1$ to the right is bounded:
\begin{multline*}
  \| (1+\lambda^2 W_\lambda^{})^{l/2} (\lambda^{-1} D_x^{})^\alpha D_y^p \{D_x^\beta \chi_1^{}\} \chi \eta_1^{} \eta_E^{} B_+^{-1/2} \| \\
  \le \| (1+\lambda^2 W_\lambda^{})^{l/2} (\lambda^{-1} D_x^{})^\alpha D_y^p \chi \eta_1^{} R_\lambda \| \| R_\lambda^{-1} \{D_x^\beta \chi_1^{} \} \eta_E^{} \| \| B_+^{-1/2} \| \le C.
\end{multline*}
This follows by Lemma \ref{l:e}(\ref{l:e1}-\ref{l:e2}). Thus, to prove the lemma it suffices to estimate the right-hand side of \eqref{b1}. As we will shortly explain, we have
\[
  \| F \eta_{1,1}^{} \eta_E^{} D_x^\beta f B_+^{-1/2} \| \le \| F \eta_{1,1}^{} R_\lambda^{} \| \| R_\lambda^{-1} \eta_E^{} \| \| D_x^\beta f B_+^{-1/2} \| \le C.
\]
In this inequality, the first factor is bounded, again by Lemma \ref{l:e}(\ref{l:e1}-\ref{l:e2}). The second factor is clearly bounded by $(1+\mu)$. The third factor is bounded because $f (D_x^\beta)^* D_x^\beta f \le C B_+$. Thus, to complete the proof of the lemma we need to estimate the right-hand side of the inequality
\begin{equation} \label{b2}
  \begin{split}
    \| F \eta_{1,1} [D_x^\beta f, \eta_E] B_+^{-1/2} \| & = \| F \eta_{1,1} R_\lambda^{} R_\lambda^{-1} [D_x^\beta f, \eta_E^{}] R_\lambda^{} \tilde{\eta}_E^{} R_\lambda^{-1} \| \\
    & \le C \| R_\lambda^{-1} [D_x^\beta f, \eta_E^{}] R_\lambda \|.
  \end{split}
\end{equation}
Here, $\tilde{\eta}_E$ is the cutoff operator defined next to \eqref{etaE}.

Let us now estimate the right-hand side of \eqref{b2}. Since $\eta_E = \phi(R_\lambda)$ for $\phi \in C_0^\infty((0,2))$, by the Helffer-Sj\"ostrand formula, we have
\[
  \eta_E = \phi(R_\lambda) = \int_\C g_\phi(z) (R_\lambda - z)^{-1} dz \wedge d\bar{z}
\]
for an appropriated function $g_\phi$ \cite{D}. Using the Resolvent Identity, we find
\[
  R_\lambda^{-1} [D_x^\beta f, (R_\lambda^{} - z)^{-1}] R_\lambda^{} = (R_\lambda^{} - z)^{-1} [D_x^\beta f, \lambda^{-2} L_\lambda^{}] R_\lambda^{} (1+z(R_\lambda^{}-z)^{-1}).
\]
Thus
\begin{equation} \label{b3}
  \begin{split}
    & \| R_\lambda^{-1} [D_x^\beta f, \eta_E^{}] R_\lambda^{} \| \\
    & \le C \| [D_x^\beta f, \lambda^{-2} L_\lambda^{}] R_\lambda^{} \| \int_\C |g_\phi(z)| |\imag(z)|^{-1} (1 + |z| |\imag(z)|^{-1}) dz \wedge d\bar{z} \\
    & \le C \| [D_x^\beta f, \lambda^{-2} L_\lambda^{}] R_\lambda^{} \|.
  \end{split}
\end{equation}
The integral above is finite because of the properties of $g_\phi$ (see (5) and (H3) in \cite{D}). To estimate the right-hand side of \eqref{b3}, it suffices to write \eqref{Llame} as
\[
  L_\lambda^{} = D_x^* E_0^{} D_x^{} + D_x^* E_0^{} + E_0^* D_x^{} + E_0 + \lambda^2 D_y^* D_y^{} + \lambda^4 W_\lambda.
\]
Consequently
\begin{equation} \label{b5}
  \begin{split}
    [D_x^\beta f, \lambda^{-2} L_\lambda^{}] & = \eta_1^{} [D_x^\beta \chi, \lambda^{-2} D_x^* E_0^{} D_x^{} + \lambda^{-2} D_x^* E_0^{} + \lambda^{-2} E_0^* D_x^{} \\
    & \quad + \lambda^{-2}E_0^{}] + [D_x^\beta \chi \eta_1^{}, D_y^* D_y^{}] + \lambda^2 \eta_1 [D_x^\beta \chi, W_\lambda].
  \end{split}
\end{equation}
The first term on the right-hand side of this expression can be written as
\[
  \eta_1^{} \lambda^{-2} ( D_x^\beta \{D_x^T \chi\} E_0^{} D_x^{} + D_x^\beta D_x^* E_0^{} \{D_x^{} \chi\} + E_0^{} D_x^* E_0^{} D_x^{} \chi + D_x^* E_0^{} D_x^{} \chi),
\]
the second term can be written as
\[
  \eta_1^{} \lambda^{-2} ( E_0^{} \chi D_x^{} + E_0^* \{D_x^{} \chi\} D_x^\beta + E_0^* \{D_x^\beta D_x^{} \chi\} ),
\]
and we have a similar expression for the third term. The fourth term on the right-hand side of \eqref{b5} is given by
\[
  \lambda^{-2} \eta_1^{} \{ D_x^\beta \chi E_0 \}.
\]
All these terms give a bounded contribution to \eqref{b3} by Lemma \ref{l:e}(\ref{l:e1}-\ref{l:e2}) because they contain at most two derivative operators with respect to $x$ (multiplied by $\lambda^{-1}$). The fifth term on the right-hand side of \eqref{b5}, which is given by
\begin{multline*}
  - 2 \chi \lambda \{D_y \eta_1\} \lambda^{-1} D_x^\beta D_y^{} - \chi \lambda \{D_y D_y \eta_1\} \lambda^{-1} D_x^\beta \\
  - 2 \{D_x^\beta \chi\} \{D_y \eta_1\} D_y - \{D_x^\beta \chi\} \{D_y D_y \eta_1\},
\end{multline*}
gives a bounded contribution to \eqref{b3}. Again, this follows by Lemma \ref{l:e}\eqref{l:e2} and by noting that $\eta_1(y) = \phi(y/\lambda)$. Finally, the sixth term on the right-hand side of \eqref{b5} is equal to
\[
  \chi \eta_1 \{ D_x^\beta \lambda^2 W_\lambda \}.
\]
Recall from \eqref{l:exp2} that $\{D_x^\beta \lambda^2 W_\lambda \} = O(y^2)$. Thus, by Lemma \ref{l:e}\eqref{l:e2} (with $l=2$), the above term gives a bounded contribution to \eqref{b3}.

To summarize, we have shown that all the terms in \eqref{b5} give a bounded contribution to \eqref{b3} so that \eqref{b2} is bounded. This completes the proof of the lemma.
\end{proof}

Here is the proof of Lemma \ref{l:dyeta}:

\begin{proof}[Proof of Lemma \ref{l:dyeta}]
Let $\eta_1$ be the cutoff function in \eqref{eta1}, and let $\chi_1 = \chi_1(\sigma)$ be a smooth function on $N\Sigma$ supported in $\mathcal{V}$ with $\chi_1 \chi = \chi$ and $0 \le \chi_1 \le 1$. Observe that $\{ D_y \eta \}$ is supported in the region $\set{(\sigma,n)}{\lambda^s \delta_0 < |n| < \lambda^s \delta}$, and $\eta_1 = 1$ on the support of $\{D_y \eta\}$ for large $\lambda$ (namely $\lambda > (\delta/\delta_1)^{1/(1-s)}$). Thus, by Lemma \ref{l:e}\eqref{l:e3}, for any integer $l>0$,
\[
  \begin{split}
    & \| \chi \{D_y \eta\} (\lambda^{-1} D_x)^\alpha D_y R_{\sharp \lambda}^{l+1} \| \\
    & \qquad \le \lambda^{-sl} \| \chi_1 \{D_y \eta\} \lambda^{sl} \langle n \rangle^{-l} \| \| \chi \eta_1 \langle n \rangle^l (\lambda^{-1} D_x)^\alpha D_y R_{\sharp \lambda}^{l+1} \| \\
    & \qquad \le C \lambda^{-sl} \| \chi \eta_1 (1+\lambda^2 W_\lambda)^{l/2} (\lambda^{-1} D_x)^\alpha D_y R_{\sharp \lambda}^{l+1} \| \le C \lambda^{-sl}.
  \end{split}
\]
This is the desired estimate.
\end{proof}

\section{Proof of Theorem \ref{t:main}} \label{s:proof}

\begin{proof}[Proof of Theorem \ref{t:main}]
Since
\[
  \| e^{-itL_\lambda} \psi - e^{-itL_{0,\lambda}} \psi \|^2 = 2 \langle \psi, \psi \rangle - 2 \real \langle e^{-itL_\lambda} \psi, e^{-itL_{0,\lambda}} \psi \rangle,
\]
it is enough to prove that
\[
  \sup_{t \in [0,T]} |\langle e^{-itL_\lambda} \psi, e^{-itL_{0,\lambda}} \psi \rangle - \langle \psi, \psi \rangle| \le C \lambda^{s-1}
\]
for $0 < s < 1$ and $\psi$ in a dense subset of $L^2(N \Sigma, \dvol_{N \Sigma})$. Therefore, our goal is to prove this inequality for $\psi \in C^\infty_0(N \Sigma)$.

We begin the analysis by inserting the energy cutoff $\eta_{\sharp E}$ given in \eqref{etaE}. Since $\| L_{\sharp \lambda} \psi \| \le C \lambda^2$, we have
\[
  \| (1-\eta_{\sharp E}^{}) \psi \| \le \| (1-\eta_{\sharp E}^{}) L_{\sharp \lambda}^{-1} \| \| L_{\sharp \lambda}^{} \psi \| \le \mu^{-1} \lambda^{-2} C \lambda^2 \le C \mu^{-1}
\]
for any $\mu > 0$. Thus, it is enough to prove that, for each $\mu > 0$,
\[
  \sup_{t \in [0,T]} |\langle e^{-itL_\lambda} \eta_E^{} \psi, e^{-itL_{0,\lambda}} \eta_{0,E}^{} \psi \rangle - \langle \eta_E^{} \psi, \eta_{0,E}^{} \psi \rangle| \le C \lambda^{s-1}.
\]

We introduce also a stronger cutoff in the $n$ variable. Let $\eta_2 = \eta_2(n)$ be a smooth function on $N\Sigma$ supported in $N \Sigma_{\lambda^s/2}$ that equals $1$ on $N \Sigma_{\lambda^s/4}$ with $0 \le \eta_2 \le 1$. Define $\langle n \rangle = (1+|n|^2)^{1/2}$. Note that \eqref{l:exp2} implies that $\langle n \rangle \le C (1 + \lambda^2 W_\lambda)^{1/2}$. Thus, by Lemma \ref{l:e}\eqref{l:e1},
\[
  \| (1-\eta_2^{}) \eta_{\sharp E}^{} \| \le \lambda^{-s} \| (1-\eta_2^{}) \lambda^s \langle n \rangle^{-1} \| \| \langle n \rangle R_{\sharp \lambda}^{1/2} \| \| R_{\sharp \lambda}^{-1/2} \eta_{\sharp E}^{} \| \le C \lambda^{-s}.
\]
Therefore, to prove the theorem it suffices to show that
\begin{equation} \label{goal}
  \sup_{t \in [0,T]} |\langle e^{-itL_\lambda} \eta_E^{} \psi, \eta_2^{} e^{-itL_{0,\lambda}} \eta_{0,E}^{} \psi \rangle - \langle \eta_E^{} \psi, \eta_2^{} \eta_{0,E}^{} \psi \rangle| \le C \lambda^{s-1}
\end{equation}
for each $\mu > 0$ and every $\psi \in C_0^\infty(N \Sigma)$.

Let us prove \eqref{goal}. Recall from \eqref{Llame} that $L_\lambda = L_{0,\lambda} + Q_\lambda$. Using the partition of unity $\{\chi_j\}$ given in \eqref{pu}, and integrating the derivative, we write
\begin{multline} \label{a1}
  \langle e^{-itL_\lambda} \eta_E^{} \psi, \eta_2^{} e^{-itL_{0,\lambda}} \eta_{0,E}^{} \psi \rangle - \langle \eta_E^{} \psi, \eta_2^{} \eta_{0,E}^{} \psi \rangle \\
  = i \int_0^t \sum_{j=0}^m \langle e^{-isL_\lambda} \eta_E^{} \psi, \chi_j^{} ([L_{0, \lambda}^{}, \eta_2^{}] + \eta_2^{} \tilde{Q}_\lambda^{}) \chi_j^{} e^{-isL_{0,\lambda}} \eta_{0,E}^{} \psi \rangle ds,
\end{multline}
where (in local coordinates)
\[
  \tilde{Q}_\lambda = Q_\lambda + E_1 = (D_x + E_0)^* E_1 (D_x + E_0)
\]
and
\begin{equation} \label{a2}
  [L_{0,\lambda}^{}, \eta_2^{}] = \lambda^2 [D_y^* D_y^{}, \eta_2^{}] = \lambda^2 D_y^* \{ D_y^{} \eta_2^{} \} + \lambda^2 \{D_y^* \eta_2^{}\} D_y^{}.
\end{equation}
The term $E_1$ in $\tilde{Q}_\lambda$ arises from commuting $\chi_j$ with $Q_\lambda$ to the right in \eqref{a1} (note that $\chi_j$ commutes with $[L_{0,\lambda}, \eta_2]$). We next estimate the contribution arising from each of these terms to \eqref{a1}.

The contribution arising from $[L_{0, \lambda}, \eta_2]$ to the integrand of \eqref{a1} tends to zero faster than any inverse power of $\lambda$. We will prove this for the first term in \eqref{a2}; the proof for its adjoint is similar. In fact, by Lemma \ref{l:dyeta}, for any integer $l > 0$,
\[
  \begin{split}
    & \lambda^2 \langle \chi_j^{} \{D_y^{} \eta_2\} D_y^{} e^{-isL_\lambda} \eta_E^{} \psi, \chi_j^{} e^{-isL_{0, \lambda}} \eta_{0,E}^{} \psi \rangle \\
    & \le C \lambda^2 \| \chi_j^{} \{D_y \eta_2^{}\} D_y^{} \eta_E^{} e^{-isL_\lambda} \psi \| \le C \lambda^2 \| \chi_j^{} \{D_y \eta_2^{}\} D_y^{} R_\lambda^{l+1} \| \| R_\lambda^{-l-1} \eta_E \| \\
    & \le C (1+\mu)^{l+1} \lambda^{2-sl}.
  \end{split}
\]
By choosing $l$ as large as desired, this proves the claim.

The contribution arising from $\eta_2 E_1$ to the integrand of \eqref{a1} is bounded by
\[
  C \| \chi_j^{} \eta_2^{} E_1^{} \eta_{0,E}^{} \| \le C \lambda^{-1} \| \chi_j^{} \eta_2^{} \lambda E_1^{} \langle n \rangle^{-1} \| \| \langle n \rangle R_{0,\lambda}^{1/2} \| \| R_{0,\lambda}^{-1/2} \eta_{0,E}^{} \| \le C \lambda^{-1}.
\]
This follows by Lemma \ref{l:e}\eqref{l:e1}.

Finally we estimate the contribution arising from
\begin{equation} \label{a4}
  \eta_2^{} Q_\lambda^{} = D_x^* \eta_2^{} E_1^{} D_x^{} + D_x^* \eta_2^{} E_1^{} + E_1^* \eta_2^{} D_x^{}
\end{equation}
to the integrand of \eqref{a1}. First observe that
\[
  \| \eta_2^{} E_1^{} \| \le C \lambda^s/\lambda = C \lambda^{s-1}.
\]
Now, consider the first term in \eqref{a4}. Then, by Lemma \ref{l:Dx},
\begin{multline*}
  \langle e^{-isL_\lambda} \eta_E^{} \psi, \chi_j^{} D_x^* \eta_2^{} E_1^{} D_x^{} \chi_j^{} e^{-isL_{0,\lambda}} \eta_{0,E}^{} \psi \rangle \\
  \le \| \eta_1^{} D_x^{} \chi_j^{} e^{-isL_\lambda} \eta_E^{} \psi \| \| \eta_2^{} E_1^{} \| \| \eta_1^{} D_x^{} \chi_j^{} e^{-isL_{0,\lambda}} \eta_{0,E}^{} \psi \| \le C \lambda^{s-1}.
\end{multline*}
Similarly we obtain the same estimate for the other two terms in \eqref{a4}.

Therefore, by combining the estimates in the last three paragraphs into \eqref{a1}, we conclude that
\[
  \sup_{t \in [0,T]} |\langle e^{-itL_\lambda} \eta_E^{} \psi, \eta_2^{} e^{-itL_{0,\lambda}} \eta_{0,E}^{} \psi \rangle - \langle \eta_E^{} \psi, \eta_2^{} \eta_{0,E}^{} \psi \rangle| \le m T C \lambda^{s-1}.
\]
This proves \eqref{goal} and thus proves the theorem. \end{proof}

\end{document}